\title{Tight Lower Bounds for Problems Parameterized by Rank-width}
\author{Benjamin Bergougnoux}{University of Warsaw, Poland}{benjamin.bergougnoux@mimuw.edu.pl}{0000-0002-6270-3663}{}
\author{Tuukka Korhonen}{University of Bergen, Norway}{tuukka.korhonen@uib.no}{0000-0003-0861-6515}{Supported by the Research Council of Norway via the project
BWCA (grant no. 314528).}
\author{Jesper Nederlof}{Utrecht University, The Netherlands}{j.nederlof@uu.nl}{0000-0003-1848-0076}{Supported by the project CRACKNP that has received funding from the European Research Council (ERC) under the European Union’s Horizon 2020 research and innovation programme (grant agreement No 853234)}
\authorrunning{B.\,Bergougnoux, T.\,Korhonen and J.\,Nederlof}
\keywords{rank-width, exponential time hypothesis, Boolean-width, parameterized algorithms, independent set, dominating set, maximum induced matching, feedback vertex set}
\begin{document}

\maketitle

\begin{abstract}
We show that there is no $2^{o(k^2)} n^{O(1)}$ time algorithm for \textsc{Independent Set} on $n$-vertex graphs with rank-width $k$, unless the Exponential Time Hypothesis (ETH) fails.
Our lower bound matches the $2^{O(k^2)} n^{O(1)}$ time algorithm given by Bui{-}Xuan, Telle, and Vatshelle~[Discret.~Appl.~Math.,~2010] and it answers the open question of Bergougnoux and Kant{\'{e}}~[SIAM J. Discret. Math.,~2021].
We also show that the known $2^{O(k^2)} n^{O(1)}$ time algorithms for \textsc{Weighted Dominating Set}, \textsc{Maximum Induced Matching} and \textsc{Feedback Vertex Set} parameterized by rank-width $k$ are optimal assuming ETH.
Our results are the first tight ETH lower bounds parameterized by rank-width that do not follow directly from lower bounds for $n$-vertex graphs. 

\end{abstract}
 \section{Introduction}
Decompositions of graphs and their associated width parameters are a popular approach for solving NP-hard graphs problems. 
Several fundamental graph problems, like \textsc{Independent Set}, are known to be fixed-parameter tractable parameterized by width parameters like the treewidth ($\tw$) of the input graph~\cite{DBLP:journals/dam/ArnborgP89}.
While treewidth is the most prominent width parameter, it has limited applicability as it can be bounded only for sparse graphs. 
To capture the tractability of problems on structured dense graph classes, like cographs and distance-hereditary graphs, the parameter clique-width ($\cw$) was introduced by Courcelle, Engelfriet, and Rozenberg~\cite{DBLP:journals/jcss/CourcelleER93}.
Every graph with treewidth $\tw$ has clique-width $\cw \le O(2^{\tw})$, and for example complete graphs have unbounded treewidth but constant clique-width.
Many fixed-parameter tractability results parameterized by treewidth generalize to parameterization by clique-width~\cite{DBLP:journals/mst/CourcelleMR00}, and in particular \textsc{Independent Set} can be solved in time $2^{\cw} n^{O(1)}$ given a decomposition witnessing clique-width at most $\cw$.

The width parameter rank-width ($\rw$) was introduced by Oum and Seymour~\cite{DBLP:journals/jct/OumS06} in order to obtain a fixed-parameter approximation algorithm for computing the clique-width.
In particular, they showed that $\rw \le \cw \le 2^{\rw+1}-1$ and rank-width can be 3-approximated in time $8^{\rw} n^{O(1)}$, implying an exponential $2^{O(\cw)}$-approximation for clique-width within the same time complexity.
Even though multiple improvements for computing rank-width have been given~\cite{DBLP:conf/stoc/FominK22,DBLP:journals/siamcomp/HlinenyO08,DBLP:journals/siamdm/JeongKO21,DBLP:journals/talg/Oum08}, the only known way to approximate clique-width remains via computing rank-width.

In the past decade, the focus of the study of algorithms on graph decompositions has shifted from complexity classification into establishing fine-grained bounds on the time complexity as a function of the parameter~\cite{DBLP:journals/algorithmica/BergougnouxKK20,DBLP:journals/tcs/BroersmaGP13,DBLP:journals/jacm/CyganKN18,DBLP:journals/talg/CyganNPPRW22,DBLP:journals/siamcomp/FominGLS14,DBLP:journals/talg/FominGLSZ19,DBLP:conf/stacs/GroenlandMNS22,DBLP:journals/tcs/JansenN19,DBLP:journals/siamdm/Lampis20,DBLP:journals/talg/LokshtanovMS18,DBLP:journals/tcs/OumSV14}, under either the Exponential Time Hypothesis (ETH) or the Strong Exponential Time Hypothesis (SETH)~\cite{ImpagliazzoP01}.
For example, Lokshtanov, Marx, and Saurabh~\cite{DBLP:journals/talg/LokshtanovMS18} showed that assuming SETH, \textsc{Independent Set} cannot be solved in time $(2-\varepsilon)^{\pw} n^{O(1)}$ parameterized by path-width $(\pw)$ for any constant $\varepsilon>0$, which also translates into a tight lower bound of $(2-\varepsilon)^{\cw} n^{O(1)}$ parameterized by clique-width because of the relation $\cw \le \pw+2$~\cite{DBLP:journals/siamdm/FellowsRRS09}.
Lampis~\cite{DBLP:journals/siamdm/Lampis20} showed that for every constant $k \ge 3$, the optimal time complexity of $k$-coloring parameterized by clique-width is $(2^k-2)^{\cw} n^{O(1)}$ assuming SETH.

Even though fine-grained lower bounds parameterized by clique-width have been intensively studied~\cite{DBLP:journals/algorithmica/BergougnouxKK20,DBLP:journals/tcs/BroersmaGP13,DBLP:journals/siamcomp/FominGLS14,DBLP:journals/talg/FominGLSZ19,DBLP:journals/siamdm/Lampis20}, less attention has been given to fine-grained lower bounds parameterized by rank-width.
As the only known way to compute clique-width is via computing rank-width and using the constructive version of the inequalities $\rw \le \cw \le 2^{\rw+1}-1$, it could be argued that fine-grained bounds parameterized by rank-width have more significance than bounds parameterized by clique-width: In the end, the only known way to use clique-width in its full generality is to actually use rank-width.

The lack of fine-grained lower bounds for parameterizations by rank-width in the literature could be explained by the fact that the best known upper bounds appear to require more complicated arguments than for other width parameters.
In particular, while the translation to clique-width or a straightforward dynamic programming leads to a double-exponential $2^{2^{O(\rw)}} n^{O(1)}$ time algorithm for \textsc{Independent Set} parameterized by rank-width, Bui{-}Xuan, Telle, and Vatshelle showed in 2010 that surprisingly this is not optimal, giving a $2^{O(\rw^2)} n^{O(1)}$ time algorithm by exploiting the algebraic properties of rank-width~\cite{Bui-XuanTV10}.
It was asked by Bergougnoux and Kant{\'{e}}~\cite{BergougnouxK21} whether this algorithm could be shown to be optimal assuming ETH, and by Vatshelle~\cite{vatshelle:thesis} whether a $2^{O(\rw)} n^{O(1)}$ time algorithm exists.

In this paper, we show that assuming ETH, the $2^{O(\rw^2)} n^{O(1)}$ time algorithm for \textsc{Independent Set} by Bui{-}Xuan, Telle, and Vatshelle is optimal.
We show in fact a slightly more general result, using parameterization by linear rank-width ($\lrw$), which is a path-like version of rank-width whose value is at least the rank-width, i.e., $\rw \le \lrw$.

\begin{theorem}\label{thm:main}
Unless ETH fails, there is no $2^{o(\lrw^2)} n^{O(1)}$ time algorithm for \textsc{Independent Set}, where $\lrw$ is the linear rank-width of the input graphs.
\end{theorem}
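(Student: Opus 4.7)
The plan is a polynomial-time reduction from $3$-\textsc{SAT} on $N$ variables to \textsc{Independent Set} on a graph $G$ with $|V(G)| = N^{O(1)}$ and $\lrw(G) = O(\sqrt{N})$. Plugging such an instance into a hypothetical $2^{o(\lrw^2)} n^{O(1)}$ algorithm would decide $3$-\textsc{SAT} in $2^{o(N)}$ time, contradicting ETH (after invoking the Sparsification Lemma to assume that the input has $O(N)$ clauses).

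I would partition the variables of (the sparsified) $\varphi$ into $k := \lceil \sqrt{N}\,\rceil$ blocks $V_1, \ldots, V_k$ of size at most $k$, and build $G$ as a concatenation along a path of $k$ \emph{block gadgets} $S_1, \ldots, S_k$, together with \emph{clause gadgets} attached between the block gadgets of the blocks sharing a clause. Each block gadget $S_i$ should encode the $2^k$ partial assignments of $V_i$ while exposing only $O(k)$ \emph{interface vertices} towards the rest of $G$ --- the natural candidate being $k$ ``bit'' vertices corresponding to the variables of $V_i$, constrained by local auxiliary structure so that the maximum independent sets of $S_i$ of a prescribed size are in bijection with the partial assignments of $V_i$. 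All clause gadgets attach to $S_i$ only through these interface vertices. The linear order of the block gadgets (with clause gadgets slotted between the extremal blocks they link) is the candidate witness of linear rank-width; the goal is to argue that every prefix of this order induces a cut of rank $O(k)$.

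The main obstacle is controlling the combined rank contribution of all clause gadgets crossing a cut. For an arbitrary $3$-\textsc{SAT} instance the clauses crossing a cut can collectively involve $\Theta(N)$ distinct interface vertices on each side, so the bipartite adjacency may have rank well above $\sqrt{N}$. The strategy to resolve this is to precede the IS reduction by a preliminary reduction from $3$-\textsc{SAT} to a more structured ETH-hard problem whose constraints are ``geographically local'' --- for instance a grid-structured CSP with cells of a suitable domain, so that every constraint only links cells adjacent in the natural row-major order --- and then apply the IS construction to the structured instance. With such a starting point, each cut is crossed by only $O(\sqrt{N})$ clause gadgets of bounded rank, and the cut-rank bound follows by a direct calculation. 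The technical heart of the proof is simultaneously (i) arranging for the preliminary reduction to preserve $2^{\Omega(N)}$ hardness while producing the desired locality, (ii) designing the block-selector auxiliary structure so that no spurious independent sets appear, and (iii) verifying that none of these gadgets injects a parasitic high-rank bipartite interaction across any cut.
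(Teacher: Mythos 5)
There is a genuine gap in your plan, and it is a fundamental one rather than a missing detail. You propose a path of $k=\lceil\sqrt N\rceil$ block gadgets, each exposing only $O(k)$ interface vertices, with all cross-block interactions (clause gadgets) routed through these interfaces, together with a pre-reduction that makes every constraint local so that only adjacent blocks interact. The problem is that any graph built this way has linear \emph{Boolean-width} $O(k)$, not just linear rank-width $O(k)$: at a cut between $S_1\cup\dots\cup S_i$ and the rest, only $O(k)$ vertices on each side carry edges across the cut (the interfaces of the two adjacent blocks, plus $O(1)$ per crossing clause gadget, all of which attach only to those $O(k)$ interface vertices), so the number of distinct neighborhoods across the cut is $2^{O(k)}$. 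Since \textsc{Independent Set} can be solved in $2^{O(\bw)}n^{O(1)}$ time given a decomposition of Boolean-width $\bw$, a correct such reduction would solve $3$-SAT on $N$ variables in time $2^{O(\sqrt N)}n^{O(1)}$, contradicting ETH. Hence the reduction cannot simultaneously be correct, be local, and expose only $O(k)$ interface vertices. Pushing this further: if you instead make the blocks rich enough (e.g.\ a grid with domain $2^{\Theta(\sqrt N)}$ per cell) so that the local CSP really is $2^{\Omega(N)}$-hard, then $\Omega(N)$ interface vertices must carry information across some cut, and under your ``bit vertex'' encoding this pushes the cut-rank back up to $\Omega(N)$, again breaking the $O(\sqrt N)$ rank-width bound. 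In short, the same dichotomy kills both versions of the pre-reduction.

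The idea your plan is missing is exactly what the paper builds its construction around: the quadratic gap between rank-width and Boolean-width, witnessed by the universal $2k$-rank cut $R_{2k}$. A single cut of rank $2k$ can support $2^{\Theta(k^2)}$ pairwise distinct neighborhoods, provided the vertex subsets are drawn from the family $\sF_{2k}$ (\Cref{def:sF_k}, \Cref{lem:different:neighborhood}); the paper uses this to pack a full assignment of $k^2$ variables into one rank-$O(k)$ cut. Consistency of the assignment across the $m$ per-clause copies $A_1,\dots,A_m$ is then enforced globally --- not by local ``equality'' gadgets along a path, but by making all copies share the \emph{same} block $B$ on the other side of the cut, so that distinct assignments leave distinct forbidden neighborhoods in $B$ (\Cref{claim:IS:neighborhood:different}, \Cref{claim:IS:size}). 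This is what keeps the linear rank-width at $O(k)$ while the Boolean-width is genuinely $\Omega(k^2)$ (this separation is even made explicit as \Cref{thm:sep}). A plain ``interface of $O(k)$ bit vertices'' cannot reproduce this, because each such vertex contributes only one additional bit of cut information, so the Boolean-width of the cut collapses to $O(k)$. To make your plan work you would effectively have to replace the $O(k)$ bit vertices by something isomorphic to $A^{2k}[\cS]$ and the clause interaction by the shared-$B$ mechanism --- at which point you have reconstructed the paper's construction.
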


Unlike for the fine-grained lower bounds parameterized by clique-width, for our result the matching upper bound holds even without the assumption that the decomposition is given because rank-width can be $3$-approximated in time $O(8^{\rw} n^4)$~\cite{DBLP:journals/talg/Oum08}.

Theorem~\ref{thm:main} is the first ETH-tight lower bound parameterized by rank-width that does not follow directly from a lower bound for $n$-vertex graphs and the relation $\rw \le n$.
Tight bounds of the latter type are known for the problems of finding a largest induced subgraph with odd vertex degrees and for partitioning a graph into a constant number of such induced subgraphs.
In particular, these problems admit $2^{O(\rw)} n^{O(1)}$ time algorithms but cannot be solved in time $2^{o(n)}$ unless ETH fails~\cite{DBLP:journals/algorithmica/BelmonteS21}.

Algorithms with time complexity $2^{O(\rw^2)} n^{O(1)}$ were given for \textsc{Dominating Set} and \textsc{Maximum Induced Matching} by Bui{-}Xuan, Telle, and Vatshelle~\cite{Bui-XuanTV10,Bui-XuanTV13} and for \textsc{Feedback Vertex Set} by Ganian and Hlinen{\'{y}}~\cite{GanianH10}.
We extend our lower bound construction to show that these algorithms for \textsc{Maximum Induced Matching} and \textsc{Feedback Vertex Set} and a weighted variant of the algorithm for \textsc{Dominating Set} are optimal assuming ETH.


\begin{theorem}\label{thm:second}
Unless ETH fails, there is no $2^{o(\lrw^2)} n^{O(1)}$ time algorithm for \textsc{Weighted Dominating Set}, \textsc{Maximum Induced Matching}, or \textsc{Feedback Vertex Set}, where $\lrw$ is the linear rank-width of the input graphs.
\end{theorem}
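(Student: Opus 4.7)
The plan is to build on the construction used to prove Theorem~\ref{thm:main} rather than reducing from \textsc{Independent Set} as a black box. The Theorem~\ref{thm:main} construction produces, from a size-$N$ instance of a quadratic-ETH-hard problem (e.g., 3-SAT), a graph of linear rank-width $O(\sqrt{N})$ whose independent sets encode satisfying assignments. For each of the three target problems I would keep the global ``skeleton'' of this construction intact and replace each vertex/edge that represents a local combinatorial decision with a problem-specific gadget, so that optimal solutions to the new problem are in bijection with independent sets in the original construction.

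For \textsc{Weighted Dominating Set}, I would attach to each decision vertex $v$ a private pendant ``guard'' vertex that has to be dominated, and weight $v$ so that putting $v$ into the dominating set encodes ``selecting'' $v$; the adjacency constraints of the Independent Set reduction are then re-encoded by auxiliary guards whose only cheap dominators are the endpoints of the original conflict edge, making selecting both endpoints strictly more expensive than selecting at most one. For \textsc{Maximum Induced Matching}, I would replace each decision vertex $v$ by a private pendant edge $e_v$; choosing $e_v$ in the induced matching corresponds to picking $v$ in an independent set, and the conflict edges of the original construction automatically rule out choosing two conflicting pendant edges simultaneously. For \textsc{Feedback Vertex Set}, I would hang a triangle off each decision vertex $v$, so any feedback vertex set pays exactly one for the triangle; one can then arrange that paying one \emph{extra} to put $v$ itself into the FVS corresponds to ``not selecting'' $v$, and encode conflicts by short cycles that force at least one endpoint into the FVS.

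The main obstacle is verifying that these gadget replacements do not blow up the linear rank-width beyond an additive constant, since losing a factor of two or more in $\lrw$ would cost us the tight quadratic exponent. The strategy here is to take the linear layout witnessing $\lrw(G) = O(\sqrt{N})$ from the Theorem~\ref{thm:main} construction and insert each gadget's $O(1)$ new vertices immediately next to the decision vertex they decorate. Each cut of the new layout then differs from a corresponding cut of the old one by only a bounded number of rows and columns in the adjacency matrix over $\mathbb{F}_2$, so the cut-rank grows by at most an additive constant and $\lrw$ is preserved up to $+O(1)$.

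Once the rank-width bound is in place, the correctness of each reduction reduces to a routine exchange argument: any optimal solution to the new problem can be ``cleaned'' to respect the gadgets (e.g., touch each pendant edge at most once, contain exactly one triangle vertex per triangle, etc.), after which its restriction to the skeleton is exactly an independent set of the target size in the Theorem~\ref{thm:main} graph. Combining this with Theorem~\ref{thm:main} then rules out $2^{o(\lrw^2)} n^{O(1)}$ algorithms for all three problems under ETH.
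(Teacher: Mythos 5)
Your high-level plan works for \textsc{Maximum Induced Matching} but is underspecified for \textsc{Feedback Vertex Set} and is the wrong route entirely for \textsc{Weighted Dominating Set}. The paper handles MIM and FVS together by a single construction (\Cref{lem:mim:fvs}): add a \emph{true twin} $\widehat{v}$ to every vertex $v$ of the \textsc{Independent Set} instance, placing $\widehat{v}$ next to $v$ in the linear layout. Then $G$ has an independent set of size $k$ if and only if $G'$ has an induced matching on $k$ edges, if and only if $G'$ has an induced forest on $2k$ vertices, and the linear rank-width grows by at most one. Your pendant-edge gadget would also work for MIM, and you correctly intuit that inserting gadgets next to their decorated vertices is the way to control the rank-width. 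But your FVS plan is different in kind: ``short cycles forcing at least one endpoint into the FVS'' requires one fresh gadget per conflict edge of the \textsc{Independent Set} instance, and that instance has exponentially many (in $k$) edges between the $A_i$ parts and $B$ whose endpoints lie far apart in the linear order. A per-edge gadget cannot be placed next to both of its endpoints, and every such gadget that straddles a cut contributes a new singleton row to the cut matrix; the cut-rank then grows by $\Theta(2^{2k})$ rather than by $O(1)$. The twin-based construction avoids this precisely because a true twin of $v$ only ever duplicates $v$'s row in every cut matrix, never creating a new row pattern.

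For \textsc{Weighted Dominating Set} the proposal has, in addition, a logical flaw: a guard adjacent only to the endpoints $u,v$ of a conflict edge enforces ``at least one of $u,v$ in $D$'' --- a vertex-cover-type constraint, the complement of independence --- yet you simultaneously decree that $v\in D$ encodes ``$v$ is selected in the independent set.'' Those two interpretations contradict each other, and even if you flipped the encoding you would again need one guard per conflict edge, incurring the same rank-width blowup as above. The paper therefore does \emph{not} reduce \textsc{Weighted Dominating Set} from \textsc{Independent Set}; it gives a fresh reduction from \textsc{3-CNF-SAT} in \Cref{sec:wds}: a path-shaped construction $A_1,B_1,\widehat{B}_1,A_2,\ldots,A_m$ where each $G[B_i,\widehat{B}_i]$ is an induced perfect matching, the $\widehat{B}_i$ and the auxiliary vertices $a^{i,j}$ carry prohibitive weight, and the structural facts about $\sF_{2k}$ (\Cref{lem:different:neighborhood,lem:neighborhood:size}) force a single assignment to propagate along the chain of copies. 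That direct reduction, not a gadget-swap from \textsc{Independent Set}, is what actually proves the dominating-set part of \Cref{thm:second}.
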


\subparagraph{Boolean-width.}
Boolean-width ($\bw$) is a width-parameter introduced by Bui{-}Xuan, Telle, and Vatshelle~\cite{DBLP:journals/tcs/Bui-XuanTV11}.
It is defined on branch decompositions over the vertex set $V(G)$ with the cut function $\bw(A,B) = \log_2\abs{\{ N(X)\cap B\mid X\subseteq A\}}$, which naturally leads to algorithms with time complexity $2^{O(\bw)} n^{O(1)}$ for local problems when a branch decomposition with Boolean-width $\bw$ is given.
Bui{-}Xuan, Telle, and Vatshelle proved that $\log_2 \rw \leq \bw \leq O(\rw^2)$ and asked as an open question whether the upper bound $\bw \leq O(\rw^2)$ is tight~\cite{DBLP:journals/tcs/Bui-XuanTV11}.
Since \textsc{Independent Set} can be solved in time $2^{O(\bw)} n^{O(1)}$ given a branch decomposition with Boolean-width $\bw$, \cref{thm:main} gives evidence that there are graphs whose Boolean-width is quadratic in rank-width.
We show that indeed a small variant of a construction used for \cref{thm:main} can be used to give a quadratic separation between Boolean-width and rank-width, answering the question of Bui{-}Xuan, Telle, and Vatshelle.

\begin{theorem}\label{thm:sep}
There are graphs with rank-width $k$ and Boolean-width $\Omega(k^2)$ for arbitrarily large $k$.
\end{theorem}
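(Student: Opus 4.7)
The plan is to analyze an explicit bipartite graph $G_k$, a light variant of the gadget used in the reduction proving \cref{thm:main}. Take $L\cong R\cong \mathbb{F}_2^k$, each of size $2^k$, and make $u\in L$ adjacent to $v\in R$ iff $\langle u,v\rangle=1$ over $\mathbb{F}_2$. Since $G_k$ is bipartite, the adjacency matrix across any cut $(X,Y)$ decomposes into two submatrices of the $L\times R$ adjacency matrix, which has $\mathbb{F}_2$-rank exactly $k$. Hence every cut has cut-rank at most $2k$, and any branch decomposition witnesses $\rw(G_k)\le 2k$.

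For the Boolean-width bound, the natural cut $(L,R)$ is already tight: for $S\subseteq L$ one computes $N(S)\cap R = R\setminus \mathrm{span}(S)^\perp$, so the distinct neighborhoods $N(S)\cap R$ are in bijection with the subspaces of $\mathbb{F}_2^k$, of which there are $2^{\Theta(k^2)}$. Since Boolean-width is a minimum over branch decompositions, however, I must establish this bound for every branch decomposition of $G_k$. Using the standard fact that every branch decomposition contains an edge whose removal yields a cut $(X,Y)$ with $|X|,|Y|\ge |V(G_k)|/3$, it suffices to prove a density lemma: whenever $L'\subseteq L$ and $R'\subseteq R$ each have size at least $|V(G_k)|/12$, the number of distinct sets $\{N(S)\cap R' : S\subseteq L'\}$ is still $2^{\Omega(k^2)}$.

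The main obstacle is this density lemma. Restricted to such subsets we still have $N(S)\cap R' = R'\setminus \mathrm{span}(S)^\perp$ with $\mathrm{span}(S)\le \mathrm{span}(L')$. Since $|L'|\ge 2^{k}/6 > 2^{k-3}$, the subspace $\mathrm{span}(L')$ has dimension at least $k-2$ and therefore contains $2^{\Omega(k^2)}$ subspaces $V$. To upgrade these into distinct neighborhoods, I would argue that collisions $V_1^\perp\cap R' = V_2^\perp\cap R'$ are rare: such a collision forces $R'$ to miss the entire symmetric difference $V_1^\perp \triangle V_2^\perp$, and since $|R'|$ is a constant fraction of $|R|$, an equivalence-class counting argument (exploiting the transitive $\mathrm{GL}_k(\mathbb{F}_2)$ action on nonzero vectors) then leaves $2^{\Omega(k^2)}$ distinct neighborhoods intact, yielding $\bw(G_k)=\Omega(k^2)$.
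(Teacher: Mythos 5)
Your construction is essentially the bare bipartite inner-product graph (the universal rank cut $R_k$), whereas the paper works with a modified version of $R_{2k}$ that has two extra features precisely designed to make the lower bound argument go through. Your rank-width bound is fine, and your analysis of the specific cut $(L,R)$ is correct. The gap is in the step you yourself flag as ``the main obstacle'': the density lemma is stated but not proved, and the sketch you give would not obviously close it. Two concrete problems: (a) you implicitly assume that every subspace of $\operatorname{span}(L')$ arises as $\operatorname{span}(S)$ for some $S \subseteq L'$, which is false in general (e.g.\ if $L'$ is an affine coset, only spans of a restricted form occur); (b) the collision bound via ``$R'$ must miss $V_1^{\perp}\triangle V_2^{\perp}$'' plus a $\mathrm{GL}_k(\mathbb{F}_2)$ averaging argument is not substantiated, and it is not clear it gives $2^{\Omega(k^2)}$ survivors --- there are $2^{\Theta(k^2)}$ subspaces whose orthogonal complements have size $2^{\Theta(k)}$, and $R\setminus R'$ has room to contain many such symmetric differences, so naive counting does not work.

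The paper sidesteps the density lemma entirely with two modifications to $R_{2k}$. First, it keeps only the vertices $A^{2k}[\cS]$ with $\cS = \{s : |s\cap[k]|=1\}$ and turns each $A^{2k}[\cS_i]$ into a clique; this forces any relevant selection on the $A$-side to be an independent set, and \cref{lemma:IS:interpretation} shows exactly these sets correspond to members of $\sF_{2k}$, for which \cref{lem:different:neighborhood} guarantees pairwise distinct neighborhoods \emph{in the full} $B^{2k}$. Second, it takes $k^2$ disjoint copies of $B^{2k}$, each a clique: by \cref{lem:balsep} any branch decomposition has a cut that is balanced on $A^{2k}[\cS]$, and either that cut splits every $B_i$ (giving an induced matching of size $k^2$, hence Boolean-width $\ge k^2$ directly), or some $B_i$ lies entirely on one side, so the distinct-neighborhood count happens against an \emph{uncut} copy of $B^{2k}$ and the density lemma is never needed. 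This ``fully on one side'' dichotomy is exactly what your single $B$-side cannot provide, which is why you are stuck with the harder restricted-neighborhood counting. If you want to stick with a plain $R_k$-type graph, you would need to actually prove the density lemma; otherwise, adding the cliques and the $k^2$ redundant $B$-copies is the cleaner path.
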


\subparagraph{Our Method.} We briefly describe our main technical ideas for Theorem~\ref{thm:main}, as the other ideas are similar. 

Our starting point is the $2^{O(\rw^2)}n^{O(1)}$ time algorithm for \textsc{Independent Set} from~\cite{Bui-XuanTV10}. Intuitively, the algorithm can be thought of as normal dynamic programming over tree decompositions where we have table entries for each subset of a bag of the tree-decomposition (which is a separator in the graph). The twist however is that the size of the separator may be large, but instead we are only guaranteed that the rank (over $\mathbb{F}_2$) of the incidence matrix of the cut between the separator and the remainder of the graph is small. The crucial observation from~\cite{Bui-XuanTV10} is that we do not need to know the exact subset of the separator of vertices selected in the independent set, but only its set of neighbors across the cut, and that such a neighborhood can be described by a (row- or column-) \emph{subspace} of the mentioned incidence matrix. Since there are only $2^{O(\rw^2)}$ such subspaces, the runtime follows.

To turn this encoding idea into a reduction from \textsc{$3$-CNF-SAT} to \textsc{Independent Set} on graphs of rank-width $\rw$ (and get an ETH-tight lower bound), we first show this description cannot be further shortened: Two vertex sets of the separator that describe different subspaces in fact will have different sets of neighbors accross the cut. This allows us to design a ``copy gadget'': Once locally a certain vertex subset is chosen to be in the independent set, this is has to be copied in various places throughout the graph (such that we can check a clause of the \textsc{$3$-CNF} per one such location).
Furthermore, there is a simple inductive construction of subspaces that enables us to directly encode assignments of $\Omega(\rw^2)$ variables of an \textsc{$3$-CNF} formula into a subspace of $\mathbb{F}_2^{\rw}$. This allows us to get a tight bound.

\subparagraph{Organization.}
The rest of this paper is organized as follows.
In Section~\ref{sec:prel} we set up notation.
In Section~\ref{sec:fullrank} we present structural results on the maximal unique cut with rank-width $k$ (that we call ``the universal $k$-rank cut''). 
Subsequently, Section~\ref{sec:IS} builds upon these results to prove~\cref{thm:main}.
In \cref{sec:MIMFVS} we prove the lower bounds for \textsc{Maximum Induced Matching} and \textsc{Feedback Vertex Set}.
We prove the lower bound for \textsc{Weighted Dominating Set} in \cref{sec:wds} and we prove \cref{thm:sep} in \cref{sec:rel}.
We provide a brief conclusion in Section~\ref{sec:conc}.

\section{Notation}
\label{sec:prel}
Given two integers $i,j$ such that $1\leq i\leq j$, we denote by $[i,j]$ the set of integer $\{i,i+1,\dots,j\}$ and by $[i]$ the set $\{1,2,\dots,i\}$. 
The size of a set $V$ is denoted by $|V|$ and its power set is denoted by $2^V$. 

\subparagraph{Graphs.}
Our graph terminology is standard and we refer to \cite{Diestel12}.  The vertex set of a graph $G$ is denoted by $V(G)$ and its edge set by $E(G)$.
For every vertex set $X\subseteq V(G)$, when the underlying graph is clear from context, we denote by $\comp{X}$ the set $V(G)\setminus X$.
An edge between two vertices $x$ and $y$ is denoted by $xy$ or $yx$. 
The set of vertices that are adjacent to $x$ is denoted by $N_G(x)$. For a set $U\subseteq V(G)$, we define  $N_G(U):=\bigcup_{x\in U}N_G(x) \setminus U$. If the underlying graph is clear, then we may remove $G$ from the subscript.
Two distinct vertices $u,v \in V(G)$ are twins if $N(v) \setminus \{u\} = N(u) \setminus \{v\}$.
They are true twins if $uv \in E(G)$ and false twins if $uv \notin E(G)$.

The subgraph of $G$ induced by a subset $X$ of its vertex set is denoted by $G[X]$.
For two disjoint subsets $X$ and $Y$ of $V(G)$, we denote by $G[X,Y]$ the bipartite graph with vertex set $X\cup Y$ and edge set $\{xy \in E(G)\mid x\in X \text{ and } \ y\in Y \}$.

\subparagraph{Problem Statements.}
An \emph{independent set} is a set of vertices that induces an edgeless graph.
A \emph{matching} is a set of edges having no common endpoint and an \emph{induced matching} is a matching $M$ where every pair of edges of $M$ do not have a common adjacent edge in $G$.
Given a graph, the problems \textsc{Independent Set} and \textsc{Maximum Induced Matching} ask for respectively an independent set and an induced matching of maximum size.

A \emph{feedback vertex set} is the complement of a set of vertices inducing a forest (i.e. acyclic graph).
A set $D\subseteq V(G)$ \emph{dominates} a set $U\subseteq V(G)$ if every vertex in $U$ is either in $D$ or is adjacent to a vertex in $D$. A \emph{dominating set} of $G$ is a set that dominates $V(G)$.
Given a graph, the problems \textsc{Dominating Set} and \textsc{Feedback Vertex Set}  ask respectively for a dominating set and a feedback vertex set of minimum size.

Given a graph $G$ with a weight function $w : V(G)\to \bN$, 
the problem \textsc{Weighted Independent Set} (resp. \textsc{Weighted Dominating Set}) asks for an independent set of maximum weight (resp. dominating set of minimum weight), where the weight of a set $X\subseteq V(G)$ is $\sum_{x\in X} w(x)$.

\subparagraph{Width parameters.}
Let $V$ be a finite set with $|V| \ge 3$, and $f$ a function $f : 2^V \rightarrow \mathbb{Z}_{\ge 0}$ so that $f(\emptyset) = 0$ and $f(A) = f(\comp{A})$ for all $A \subseteq V$.
A branch decomposition of $f$ is a tree whose all internal nodes have degree 3 and whose leaves are bijectively mapped to $V$.
Observe that every edge of a branch decomposition of $f$ corresponds to a bipartition $(A,\comp{A})$ of $V$ given by the leaves on different sides of the edge.
The width of the decomposition is the maximum value of $f(A)$ over the bipartitions $(A,\comp{A})$ corresponding to the edges, and the branch-width of $f$ is the minimum width of a branch decomposition of $f$.
The width of a permutation $\sigma$ of $V$ is the maximum value of $f(A)$ over prefixes $A$ of the permutation.
The linear branch-width of $f$ is the minimum width of a permutation of $V$.
Notice that the branch-width of $f$ is always at most the linear branch-width of $f$, in particular linear branch-width corresponds to a restriction of branch-width where we demand the tree to be a caterpillar.

Let $G$ be a graph and $A,B \subseteq V(G)$ two disjoint sets of vertices.
We define $M_G(A,B)$ to be the $|A| \times |B|$ 0-1 matrix representing the bipartite graph $G[A,B]$.
The cut-rank $\rw(A,B)$ between $A$ and $B$ is defined as the $\mathbb{F}_2$-rank of $M_G(A,B)$.
For a set of vertices $A \subseteq V(G)$ we define $\rw(A) = \rw(A, \comp{A})$.
The rank-width of a graph $G$ is the branch-width of the cut-rank function $\rw$ and the linear rank-width of $G$ is the linear branch-width of $\rw$.
The Boolean-rank between $A$ and $B$ is defined as $\bw(A,B) = \log_2|\{N(X)\cap B \mid X \subseteq A\}|$, and we define $\bw(A) = \bw(A, \comp{A})$.
The Boolean-width of a graph $G$ is the branch-width of $\bw$.

We will use the following lemma about the cut-rank.

\begin{lemma}
\label{lem:cutcomb}
Let $A,B$ be disjoint subsets of $V(G)$ and $S \subseteq V(G)$.
It holds that $\rw(A,B) \le \rw(A \cap S, B)+\rw(A \setminus S, B)$.
\end{lemma}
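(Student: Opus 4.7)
The plan is to prove this directly from the linear-algebraic definition of the cut-rank. Recall that $\rw(A,B)$ is the $\mathbb{F}_2$-rank of the bipartite adjacency matrix $M_G(A,B)$, which has rows indexed by $A$ and columns indexed by $B$.

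First I would observe that the pair $(A \cap S, A \setminus S)$ partitions $A$, so the rows of $M_G(A,B)$ split (up to reordering) into two blocks: a block whose rows are indexed by $A \cap S$, which is precisely the matrix $M_G(A \cap S, B)$, and a block whose rows are indexed by $A \setminus S$, which is precisely $M_G(A \setminus S, B)$. Writing this as a block decomposition
\[
M_G(A,B) \;=\; \begin{pmatrix} M_G(A \cap S, B) \\ M_G(A \setminus S, B) \end{pmatrix},
\]
every row of $M_G(A,B)$ lies in one of the two row spaces. Hence the row space of $M_G(A,B)$ equals the sum of the row spaces of the two blocks.

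Next I would invoke the standard fact that for two subspaces $U_1, U_2$ of a vector space, $\dim(U_1 + U_2) \le \dim(U_1) + \dim(U_2)$. Applying this to the two row spaces gives
\[
\rw(A,B) \;=\; \dim(\mathrm{rowsp}(M_G(A,B))) \;\le\; \rw(A \cap S, B) + \rw(A \setminus S, B),
\]
which is the desired inequality.

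There is no real obstacle here: the statement is a direct consequence of the subadditivity of rank under stacking rows, and the only thing to check carefully is that the ambient column set stays fixed as $B$ in both sub-matrices, which is the case by construction. The argument is identical if one prefers the symmetric formulation based on column spaces via $M_G(B, A) = M_G(A,B)^{\top}$, since $\mathbb{F}_2$-rank is invariant under transposition.
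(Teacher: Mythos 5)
Your proof is correct and relies on the same core idea as the paper's---subadditivity of rank---though you phrase it via the block (stacked) decomposition and the dimension bound $\dim(U_1+U_2)\le\dim U_1+\dim U_2$ for row spaces, whereas the paper phrases it via writing $M_G(A,B)$ as a sum of two (zero-padded) matrices and the rank-of-sum bound. Your presentation is arguably a touch cleaner since it avoids the implicit zero-padding, but the arguments are essentially equivalent.
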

\begin{proof}
The edges of $G[A \cap S, B]$ and $G[A \setminus S, B]$ are disjoint, and therefore $M_G(A,B)$ can be written as the sum of (appropriately permuted) $M_G(A \cap S, B)$ and $M_G(A \setminus S, B)$.
The lemma follows from the fact that the rank of a sum of matrices is at most the sum of the ranks of the matrices.
\end{proof}

\section{Structural Results on the Universal \texorpdfstring{$k$}{k}-Rank Cut}
\label{sec:fullrank}
In this section, we study the \emph{universal $k$-rank cut} which is the unique inclusion-wise maximal cut of a given rank $k$ with no twin vertices.
This cut was used in \cite{Bui-XuanTV10} to give a non-algebraic definition of rank-width and in \cite{DBLP:journals/tcs/Bui-XuanTV11} to prove that some cuts can have rank-width $k$ and Boolean-width $\Omega(k^2)$ for arbitrary large $k$.

\begin{definition}
For any integer $k\geq 1$, the universal $k$-rank cut $R_k$ is defined as the bipartite graph with $2 \cdot 2^k$ vertices with color classes $A^k\defeq\{a_s \mid s\subseteq [k]\}$ and $B^k\defeq\{b_t \mid t \subseteq [k]\}$ such that $a_s$ and $b_t$ are adjacent if and only if $\abs{s\cap t}$ is odd.
Given $\cS=\{s_1,\dots,s_\ell\}\subseteq 2^{[k]}$, we define  $A^k[\cS]\defeq\{a_{s_1},\dots,a_{s_\ell}\}$ and $B^k[\cS]\defeq\{b_{s_1},\dots,b_{s_\ell}\}$.
    We may omit $k$ from the superscript when it is clear from the context.
\end{definition}

The following lemma characterizes the cut-rank in terms of $R_k$.

\begin{lemma}[\cite{Bui-XuanTV10}]
Let $G$ be a graph and $A,B$ disjoint subsets of $V(G)$.
It holds that $\rw(A,B) \le k$ if and only if $G[A,B]$, after removing twins, is an induced subgraph of $R_k$.
\end{lemma}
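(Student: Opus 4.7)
The plan is to prove both directions by passing between the bipartite adjacency matrix $M_G(A,B)$ and a rank factorization over $\mathbb{F}_2$, exploiting that the adjacency matrix of $R_k$ has a canonical factorization. Concretely, I would first observe that, by definition, the $(s,t)$-entry of the bipartite adjacency matrix of $R_k$ equals $|s\cap t|\bmod 2 = \chi_s^\top \chi_t$, where $\chi_s\in \mathbb{F}_2^k$ denotes the characteristic vector of $s\subseteq [k]$. Hence this matrix factorizes as $U^\top U$, where $U$ is the $k\times 2^k$ matrix over $\mathbb{F}_2$ whose columns run through every vector of $\mathbb{F}_2^k$. In particular, the $\mathbb{F}_2$-rank of the adjacency matrix of $R_k$ is at most $k$, and any induced bipartite subgraph of $R_k$ corresponds to a submatrix of $U^\top U$ and thus also has rank at most $k$.

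For the \emph{if} direction, suppose that $G[A,B]$, after removing twins, is an induced subgraph of $R_k$. The rank of $M_G(A,B)$ is invariant under deleting duplicated rows and columns (which is exactly what twin removal does on each side). Hence $\rw(A,B)$ equals the rank of the adjacency matrix of the twin-free induced subgraph, which by the previous paragraph is at most $k$.

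For the \emph{only if} direction, let $k' = \rw(A,B)\le k$ and take a rank factorization $M_G(A,B) = X^\top Y$ with $X\in \mathbb{F}_2^{k'\times |A|}$ and $Y\in \mathbb{F}_2^{k'\times |B|}$, where both factors have rank $k'$ (so the columns of $X$ span $\mathbb{F}_2^{k'}$, and likewise for $Y$). Two vertices $a,a'\in A$ are twins in $G[A,B]$ iff $(X_a - X_{a'})^\top Y = 0$, which since the columns of $Y$ span $\mathbb{F}_2^{k'}$ happens iff $X_a = X_{a'}$. Thus after deleting twins in $A$ the columns $X_a$ are pairwise distinct vectors in $\mathbb{F}_2^{k'}$; padding each with zeros yields an injection $a\mapsto s_a\subseteq [k]$. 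The same argument for $Y$ gives an injection $b\mapsto t_b\subseteq [k]$. Finally, $ab\in E(G)$ iff $X_a^\top Y_b = 1$ iff $\chi_{s_a}^\top \chi_{t_b} = 1$ iff $|s_a\cap t_b|$ is odd, so the twin-free version of $G[A,B]$ embeds as an induced subgraph of $R_k$ via $a\mapsto a_{s_a}$ and $b\mapsto b_{t_b}$.

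The main conceptual point to get right is the argument that, in a \emph{genuine} rank factorization (both factors of rank $k'$), twins on the $A$-side correspond exactly to coincidences of columns of $X$, and symmetrically on the $B$-side; it is this clean correspondence that lets us read off the subset labels in $[k]$ and verify that the adjacency pattern matches $R_k$ exactly. Once this is set up, everything else is a direct translation between rank, submatrices, and the combinatorial definition of $R_k$.
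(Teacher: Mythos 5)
The paper states this lemma with a citation to Bui-Xuan, Telle, and Vatshelle and gives no proof of its own, so there is no in-paper argument to compare against; I am assessing your proof on its own terms. Your argument is correct and complete: the factorization of $R_k$'s biadjacency matrix as $U^\top U$ with $U\in\mathbb{F}_2^{k\times 2^k}$ immediately bounds its rank by $k$ and hence that of any submatrix; twin-removal deletes duplicate rows/columns and so preserves rank, giving the ``if'' direction; and for ``only if'' the full-rank factorization $M_G(A,B)=X^\top Y$ is exactly the right tool, with the key (and correctly justified) step being that when the columns of $Y$ span $\mathbb{F}_2^{k'}$, the relation $(X_a+X_{a'})^\top Y=0$ forces $X_a=X_{a'}$, so after twin removal the column labels are distinct and the zero-padding embeds them injectively into $2^{[k]}$ while preserving the parity-of-intersection adjacency rule of $R_k$. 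This is the standard proof of the characterization. One small point worth a sentence in a polished write-up: ``removing twins'' should be read side-by-side in the bipartite graph $G[A,B]$ (which is what your row/column-coincidence argument captures); with the paper's global twin definition two isolated vertices on opposite sides would also be twins, but since $R_k$ contains both $a_\emptyset$ and $b_\emptyset$ as isolated vertices either convention yields an equivalent statement.
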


In particular, it follows that in the graph $R_k$ we have $\rw(A^k, B^k) = k$.

In the following, we define a family $\sF_k$ of subsets of $2^{[k]}$ that will play a crucial role in our reductions.
One major property of this family will be that for every distinct $\cS_1,\cS_2$ in $\sF_k$, the sets of vertices $A^k[\cS_1]$ and $A^k[\cS_2]$ have different neighborhoods in $R_k$.

\begin{definition}\label{def:sF_k}
	Given any integer $k\geq 1$ and $\cS=\{s_1,\dots,s_\ell\}\subseteq 2^{[k]}$, we define $\cS\otimes k+1$ as the collection consisting of $\{s_1,\dots,s_\ell,\{k+1\}\}$ and all the sets $\{s_1',\dots,s_\ell'\}$ such that for every $i\in[\ell]$, we have $s_i'\in \{s_i, s_i\cup \{k+1\}\}$.
	We define $\sF_1$ as the family containing $\emptyset$ and $\{\{1\}\}$.
	For every $k\geq 1$, we define $\sF_{k+1}= \bigcup_{\cS\in \sF_{k}} \cS\otimes k+1$.
\end{definition}

For example, the collection $\sF_2$ is the union of $\emptyset\otimes 2$ and $\{\{1\}\}\otimes 2$ where $\emptyset\otimes 2= \{\emptyset , \{\{2\}\}\}$ and $\{\{1\}\}\otimes 2$ is the collection containing $\{\{1\}\}$, $\{\{1,2\}\}$ and $\{\{1\},\{2\}\}$.

It is natural to view $\mathcal{S} = \{s_1,\ldots,s_\ell\} \subseteq 2^{[k]}$  as a binary $\ell \times k$  matrix (so the $s_i$'s are interpreted as $k$-dimensional binary vectors). Then the $\otimes$ operation can be thought of as transforming a single matrix into a set of matrices as displayed in \cref{fig:matrix:interpretation}, and the family $\mathcal{F}_i$ can be thought of as the family of all different binary matrices with $k$ columns that are in row-reduced echelon form: The steps in which  we add a row are exactly the pivotal columns.

\begin{figure}[ht]
	\centering
	\includegraphics[width=\linewidth]{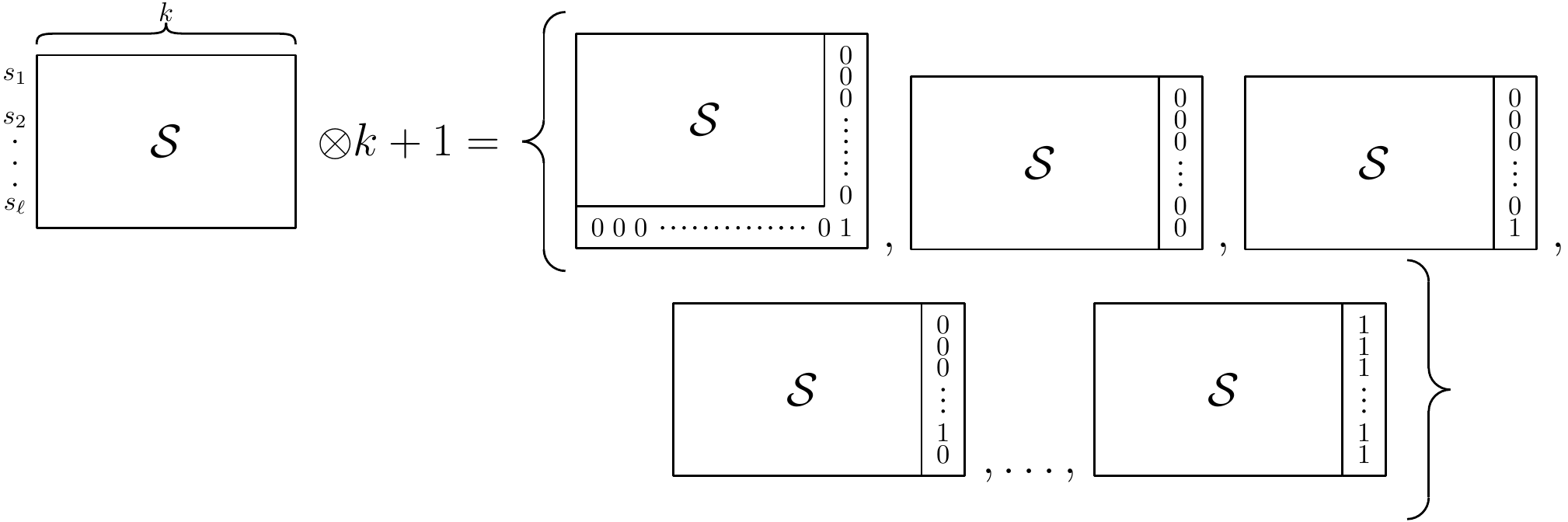}
	\caption{Illustration of the interpretation of the $\otimes$ operation as an operation on binary matrices.}
	\label{fig:matrix:interpretation}
\end{figure}

The following observation gives an alternative definition of $\sF_k$.

\begin{observation}\label{obs:sF_k}
	For every $k\geq 1$ and $\cS\subseteq 2^{[k]}$, we have $\cS=\{s_1,\dots,s_{\abs{\cS}}\}\in\sF_{k}$ if and only if there exist pairwise distinct integers $\alpha_1,\dots,\alpha_{\abs{\cS}}\in [k]$ such that for every $i\in [\abs{\cS}]$, we have $s_{i}\cap \{\alpha_1,\dots,\alpha_{|\cS|}\}=\{\alpha_i\}$ and $s_i\subseteq \{ \beta\in [k] \mid \alpha_i \leq \beta \}$.
\end{observation}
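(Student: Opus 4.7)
The statement says exactly that the binary matrices whose rows are indexed by elements of $\mathcal{S}$ and whose columns are indexed by $[k]$ are precisely the matrices in row-reduced echelon form (modulo row reordering), with the $\alpha_i$'s being the pivot columns. Given the recursive definition $\mathcal{F}_{k+1}=\bigcup_{\mathcal{S}\in\mathcal{F}_k}\mathcal{S}\otimes (k+1)$, the natural plan is induction on $k$, with the base $k=1$ checked by hand: for $\emptyset$ the condition is vacuous, and for $\{\{1\}\}$ take $\alpha_1=1$.

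For the inductive step, assume the equivalence holds for $k$ and consider $k+1$. For the forward direction, any $\mathcal{S}\in \mathcal{F}_{k+1}$ arises from some $\mathcal{T}=\{t_1,\dots,t_{|\mathcal{T}|}\}\in \mathcal{F}_k$ via $\otimes (k+1)$. By the inductive hypothesis $\mathcal{T}$ has pivots $\beta_1,\dots,\beta_{|\mathcal{T}|}\in [k]$. Either $\mathcal{S}=\mathcal{T}\cup\{\{k+1\}\}$, in which case the pivots for $\mathcal{S}$ are $\beta_1,\dots,\beta_{|\mathcal{T}|},k+1$; or $\mathcal{S}=\{t_1',\dots,t_{|\mathcal{T}|}'\}$ with $t_i'\in\{t_i,t_i\cup\{k+1\}\}$, in which case the same pivots $\beta_1,\dots,\beta_{|\mathcal{T}|}$ still work, since adding $k+1$ to a set neither changes its intersection with $\{\beta_1,\dots,\beta_{|\mathcal{T}|}\}\subseteq[k]$ nor violates $t_i'\subseteq\{\beta\in[k+1]\mid\beta_i\leq\beta\}$.

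For the backward direction, given pivots $\alpha_1,\dots,\alpha_{|\mathcal{S}|}\in [k+1]$ for $\mathcal{S}$, split into two cases. If some $\alpha_j=k+1$, then $s_j\subseteq\{k+1\}$ and $\alpha_j\in s_j$ force $s_j=\{k+1\}$; for $i\neq j$ the condition $s_i\cap\{\alpha_1,\dots,\alpha_{|\mathcal{S}|}\}=\{\alpha_i\}$ gives $k+1=\alpha_j\notin s_i$, so $s_i\subseteq [k]$. Then $\mathcal{S}\setminus\{s_j\}$ lies in $\mathcal{F}_k$ by the inductive hypothesis (with pivots $\alpha_i$, $i\neq j$), and $\mathcal{S}=(\mathcal{S}\setminus\{s_j\})\cup\{\{k+1\}\}\in (\mathcal{S}\setminus\{s_j\})\otimes (k+1)\subseteq \mathcal{F}_{k+1}$. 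Otherwise all $\alpha_i\in [k]$; set $s_i'=s_i\setminus\{k+1\}$ and $\mathcal{S}'=\{s_1',\dots,s_{|\mathcal{S}|}'\}$. Then $\mathcal{S}'$ still has pivots $\alpha_1,\dots,\alpha_{|\mathcal{S}|}$, now sitting in $[k]$; distinctness of $\mathcal{S}'$ follows since each $s_i'$ contains the distinct pivot $\alpha_i$, so $|\mathcal{S}'|=|\mathcal{S}|$. By the inductive hypothesis $\mathcal{S}'\in\mathcal{F}_k$, and $\mathcal{S}\in \mathcal{S}'\otimes (k+1)\subseteq \mathcal{F}_{k+1}$ by construction.

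The only mildly delicate point is the backward direction when $\alpha_j=k+1$: one must extract a legal sub-family of size $|\mathcal{S}|-1$ from $\mathcal{S}$, and this requires using the pivot condition to guarantee that no other $s_i$ contains $k+1$, so that the remaining rows genuinely live in $2^{[k]}$. Everything else is bookkeeping that tracks the echelon-form intuition illustrated in \cref{fig:matrix:interpretation}.
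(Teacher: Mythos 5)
Your proof is correct. The paper states \Cref{obs:sF_k} as an observation without any proof, so there is no in-text argument to compare against; your induction on $k$, which tracks the recursive definition of $\sF_k$ through the $\otimes$ operation and handles both directions by peeling off the last column (either as a fresh pivot $\alpha_j=k+1$ or as optional entries added to existing rows), is the natural way to fill that gap, and all the delicate points you flag — that $\alpha_j=k+1$ forces $s_j=\{k+1\}$ and excludes $k+1$ from every other $s_i$, and that the projections $s_i'=s_i\setminus\{k+1\}$ remain pairwise distinct because each retains its pivot $\alpha_i$ — are exactly the ones that need checking and you check them correctly.
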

Note that the distinct integers $\alpha_1,\ldots , \alpha_{|\cal S|}$ in \Cref{obs:sF_k} are the column indices of the leading coefficients in the matrix formulation in \cref{fig:matrix:interpretation}.

We are ready to prove the two properties on the collections in $\sF_k$ that we will use in our reductions.
We first prove the distinct neighborhoods property.

\begin{lemma}\label{lem:different:neighborhood}
For every $k \ge 1$ and every pair $\cS_1, \cS_2\in \sF_k$ with $\cS_1 \neq \cS_2$ it holds that $N_{R_k}(A^k[\cS_1])\neq N_{R_k}(A^k[\cS_2])$.
\end{lemma}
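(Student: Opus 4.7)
The plan is to translate the lemma into a statement over $\mathbb{F}_2$ and then appeal to the uniqueness of row-reduced echelon form (RREF). Identify each subset $s \subseteq [k]$ with its characteristic vector in $\mathbb{F}_2^k$. By the definition of $R_k$, one has $b_t \in N_{R_k}(a_s)$ iff $\langle s, t \rangle = 1$ over $\mathbb{F}_2$. Consequently, for any $\cS \subseteq 2^{[k]}$,
\[
b_t \notin N_{R_k}(A^k[\cS]) \iff \langle s, t\rangle = 0 \text{ for every } s\in \cS \iff t \in V_\cS^{\perp},
\]
where $V_\cS \subseteq \mathbb{F}_2^k$ denotes the $\mathbb{F}_2$-span of $\cS$. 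Since the standard bilinear form on $\mathbb{F}_2^k$ is non-degenerate, $(V^\perp)^\perp = V$ holds for every subspace $V \subseteq \mathbb{F}_2^k$; so my first step is to observe that this reduces the lemma to proving that the map $\cS \mapsto V_\cS$ is injective on $\sF_k$.

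Next, I would appeal to \cref{obs:sF_k} to view each $\cS \in \sF_k$ as a matrix $M_\cS$ in RREF. Ordering the rows $s_1,\dots,s_{|\cS|}$ of $\cS$ by increasing pivot $\alpha_i$, the two conditions $s_i \cap \{\alpha_1,\dots,\alpha_{|\cS|}\} = \{\alpha_i\}$ and $s_i \subseteq \{\beta : \alpha_i \le \beta\}$ supplied by the observation are precisely the defining conditions of RREF: the leading $1$ of row $i$ sits at column $\alpha_i$, and all entries in the other pivot columns $\alpha_j$ with $j \ne i$ are $0$. By the standard uniqueness of RREF, the matrix $M_\cS$, and hence the set $\cS$ itself, is determined by its row space $V_\cS$; combined with the first step this proves the lemma.

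The only point that deserves care is the uniqueness of RREF, which I would invoke as a black box; if desired, it can be obtained directly from the intrinsic description of pivot columns as the indices $\alpha$ at which $\dim(V \cap \mathbb{F}_2^{\{\alpha,\ldots,k\}})$ strictly decreases, after which each RREF row is uniquely determined by lying in $V$, carrying a leading $1$ at its pivot, and $0$ at all other pivot columns. I do not foresee any further obstacle: the passage to linear algebra is forced by the bipartite structure of $R_k$, and the identification of $\sF_k$ with the collection of RREF matrices on $k$ columns is immediate from \cref{obs:sF_k}.
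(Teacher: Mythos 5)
Your proof is correct and takes a genuinely different route from the paper's. You translate the statement into $\mathbb{F}_2$-linear algebra: identifying $N_{R_k}(A^k[\cS]) = B^k \setminus B^k[V_\cS^\perp]$, you invoke biorthogonality $(V^\perp)^\perp = V$ (valid because the standard form on $\mathbb{F}_2^k$ is non-degenerate on the whole space, even though its restriction to a subspace $V$ may well be degenerate), thereby reducing the lemma to injectivity of $\cS \mapsto V_\cS$ on $\sF_k$; you then settle injectivity by reading $\cS$ as a matrix in row-reduced echelon form via \cref{obs:sF_k} and citing RREF uniqueness. The paper instead argues by induction on $k$ following the recursive construction of $\sF_k$ in \cref{def:sF_k}; its workhorse is an auxiliary private-neighbor claim (\cref{claim:private:neighbor}: for every $\cX \subseteq \cS \in \sF_k$ some $b_t$ satisfies $N(b_t)\cap A^k[\cS]=A^k[\cX]$), which is also reused in the proof of \cref{lem:neighborhood:size}. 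The paper does mention the RREF interpretation in the discussion around \cref{fig:matrix:interpretation}, but does not exploit it in the proof. Your route is shorter and makes explicit that $\cS \mapsto V_\cS$ is a bijection between $\sF_k$ and the subspaces of $\mathbb{F}_2^k$, which is exactly what makes the $2^{\Theta(k^2)}$ count tight; the paper's route is more self-contained, avoiding RREF uniqueness as an external ingredient, and the private-neighbor claim it factors out earns its keep later.
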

\begin{proof}
	We start with the following claim.
	\begin{claim}\label{claim:private:neighbor}
		For every $k\geq 1$ and $\cS,\cX\subseteq 2^{[k]}$ with $\cX\subseteq \cS\in \sF_k$, there exists $t\subseteq [k]$ such that for every $s\in \cS$, $\abs{s \cap t}$ is odd if and only if $s\in \cX$ (i.e. $N(b_t)\cap A^k[\cS]=A^k[\cX]$).
	\end{claim}
	\begin{proof*}
        Let $\cS=\{s_1,\dots,s_\ell\}\in \sF_k$  and $\cX\subseteq \cS$.
        By \Cref{obs:sF_k}, there exists $\alpha_1,\dots,\alpha_\ell\in [k]$ such that for every $i\in [\ell]$, we have $s_{i}\cap \{\alpha_1,\dots,\alpha_{\ell}\}=\{\alpha_i\}$.
        Let $t=\{\alpha_i \mid i\in [\ell] \land s_i \in \cX\}$.
        Observe that, for every $i\in[\ell]$, we have $s_i\cap t=\{\alpha_i\}$ iff $s_i \in \cX$.
        Hence, $\abs{s_i\cap t}$ is odd iff $s_i\in \cX$ for every $i\in[\ell]$.
	\end{proof*}
	
	We are now ready to prove the lemma by induction on $k$.
	It is obviously true for $\sF_1$.
	Let $k\geq 1$ and suppose that the lemma holds for $\sF_k$.
	Let $\cS_1',\cS_2'\in \sF_{k+1}$ such that $\cS_1'\neq \cS_2'$.
	By construction of $\sF_{k+1}$, for every $i\in\{1,2\}$, there exists a unique $\cS_i\in\sF_k$ such that $\cS_i'\in \cS_i\otimes k+1$.
	
	If $\cS_1\neq \cS_2$, then by induction hypothesis, $N_{R_k}(A^k[\cS_1])\neq N_{R_k}(A^k[\cS_2])$.
	By definition of $R_{k+1}$, for each $i\in \{1,2\}$, the sets of vertices $A^{k+1}[\cS_i]$ and $A^{k+1}[\cS_i']$ have the same neighborshoods in $R_{k+1}$ when restricted to the subset $B^k \subseteq B^{k+1}$.
	We conclude that if $\cS_1\neq \cS_2$, then $N_{R_{k+1}}(A^{k+1}[\cS_1'])\neq N_{R_{k+1}}(A^{k+1}[\cS_2'])$.
	
	It remains to consider the case when $\cS_1=\cS_2=\{s_1,\dots,s_\ell\}$.
	For both $i\in \{1,2\}$, we define a set $\cX_i \defeq \{s_j \mid j\in [\ell] \land s_j \cup \{k+1\}\in \cS'_i\}$.
	Since $\cS_1'\neq \cS_2'$, we have $\cX_1\neq \cX_2$.
	By \Cref{claim:private:neighbor}, there exists $t_1\subseteq [k]$ such that $\abs{s_j\cap t_1}$ is odd if and only if $s_j\in\cX_1$ for every $j\in [\ell]$.
	We claim that $b_{t_1\cup \{k+1\}}$ is adjacent to  $A^{k+1}[\cS_2']$ but not to $A^{k+1}[\cS_1']$ in $R_{k+1}$.
	For every $j\in [\ell]$, we have $s_j\in \cX_1$ if and only if $s_j\cup\{k+1\}\in \cS_1'$, we deduce that $\abs{s\cap (t_1\cup \{k+1\})}$ is even for every $s\in \cS_1'$.
	Consequently, $b_{t_1\cup \{k+1\}}$ is not adjacent to $A^{k+1}[\cS_1']$ in $R_{k+1}$.
	As $\cX_1\neq \cX_2$, there exists $j\in [\ell]$ such that $s_j\in \cX_1 \triangle \cX_2$.
	\begin{itemize}
		\item If $s_j\in \cX_1\setminus \cX_2$, then $s_j\in \cS'_2$ and $\abs{s_j\cap (t_1\cup \{k+1\})}=\abs{s_j\cap t_1}$ is odd.
		\item If $s_j\in \cX_2\setminus \cX_1$, then $s_j\cup \{k+1\}\in \cS'_2$ and $\abs{s_j\cup \{k+1\}\cap (t_1\cup \{k+1\})}=\abs{s_j\cap t_1}+1$ is odd since $\abs{s_j\cap t_1}$ is even.
	\end{itemize}
	We deduce that $b_{t_1\cup \{k+1\}}$ is adjacent to $A^{k+1}[\cS_2']$ and conclude that $N_{R_{k+1}}(A^{k+1}[\cS_1'])\neq N_{R_{k+1}}(A^{k+1}[\cS_2'])$.
\end{proof}

We then show that the size of the neighborhood of $A^k[\cS]$ depends only on $|\cS|$.

\begin{lemma}\label{lem:neighborhood:size}
	For every $k\geq 1$ and $\cS\in \sF_k$, we have $\abs{N_{R_k}(A^k[\cS])}=2^{k}-2^{k-\abs{\cS}}$.
\end{lemma}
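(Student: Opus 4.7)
The plan is to reinterpret the adjacency rule of $R_k$ as an inner product over $\mathbb{F}_2$: identifying each subset of $[k]$ with its indicator vector in $\mathbb{F}_2^{k}$, the condition that $|s \cap t|$ is odd is exactly $\langle s, t \rangle = 1$. Thus $b_t \notin N_{R_k}(A^k[\cS])$ if and only if $t$ is orthogonal (over $\mathbb{F}_2$) to every vector in $\cS$, i.e., $t$ lies in $\cS^{\perp}$, the orthogonal complement of the $\mathbb{F}_2$-span of $\cS$.

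Next I would invoke \Cref{obs:sF_k} to argue that the vectors in $\cS = \{s_1, \ldots, s_{|\cS|}\}$ are linearly independent over $\mathbb{F}_2$. Indeed, the observation supplies distinct integers $\alpha_1, \ldots, \alpha_{|\cS|} \in [k]$ such that $s_i \cap \{\alpha_1, \ldots, \alpha_{|\cS|}\} = \{\alpha_i\}$: the submatrix of the $|\cS| \times k$ matrix corresponding to $\cS$ on columns $\alpha_1, \ldots, \alpha_{|\cS|}$ is a permutation matrix, witnessing linear independence (alternatively, these are precisely the pivot columns of the row-reduced echelon form discussed after \cref{fig:matrix:interpretation}).

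Consequently, the span of $\cS$ has dimension $|\cS|$, and so its orthogonal complement $\cS^{\perp} \subseteq \mathbb{F}_2^{k}$ has dimension $k - |\cS|$ and contains exactly $2^{k - |\cS|}$ vectors. Translating back via the identification between subsets of $[k]$ and elements of $\mathbb{F}_2^{k}$, the number of $b_t \in B^k$ that are \emph{non-neighbors} of $A^k[\cS]$ is $2^{k - |\cS|}$. Since $|B^k| = 2^k$, we conclude $|N_{R_k}(A^k[\cS])| = 2^k - 2^{k - |\cS|}$.

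The only non-routine step is the linear independence claim; everything else is a one-line consequence of the inner-product reformulation and the rank-nullity theorem for the bilinear form $\langle \cdot, \cdot \rangle$ on $\mathbb{F}_2^k$. I do not anticipate any real obstacle. As a sanity check one can also verify the formula by an induction on $k$ paralleling \Cref{lem:different:neighborhood}: passing from $\cS \in \sF_k$ to any $\cS' \in \cS \otimes (k+1)$ either keeps $|\cS'| = |\cS|$ while doubling both $2^{k+1}$ and $2^{k+1-|\cS'|}$, or increases $|\cS'|$ to $|\cS|+1$ in which case the count becomes $2^{k+1} - 2^{k-|\cS|}$, matching the claimed formula; but the linear-algebra argument above is cleaner.
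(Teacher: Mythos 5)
Your proof is correct and follows essentially the same route as the paper's: reinterpret adjacency in $R_k$ as the $\mathbb{F}_2$ inner product, observe that the non-neighbors of $A^k[\cS]$ in $B^k$ are exactly $B^k[\langle\cS\rangle^\perp]$, establish $\dim\langle\cS\rangle = |\cS|$, and apply rank-nullity. The only cosmetic difference is how you certify linear independence: you read it off directly from \Cref{obs:sF_k} (the columns $\alpha_1,\dots,\alpha_{|\cS|}$ form a permutation submatrix, i.e.\ these are the pivot columns of the echelon form), whereas the paper invokes \Cref{claim:private:neighbor} to exhibit, for each $s \in \cS$, a vector $t$ with $\langle s,t\rangle = 1$ and $\langle s',t\rangle = 0$ for $s'\neq s$, which is the dual-basis witness of independence; since that claim is itself proven from \Cref{obs:sF_k} via the same $\alpha_i$'s, the two arguments are equivalent in substance.
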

\begin{proof}
	Let $k\geq 1$ and $\cS\in \sF_k$.
	In this proof, we view each subset $s\subseteq [k]$ as a vector $(s_1,\dots,s_k)\in \bZ_2^k$ with $s_i =1$ if and only if $i\in s$ for every $i\in [k]$.
	Observe that, given $s,t\subseteq [k]$, the inner product $\inprod{s,t}$ of $s$ and $t$ is the sum $s_1t_1 + s_2t_2 +\dots +s_kt_k$ over $\bZ_2$ and it equals 1 if and only if $\abs{s\cap t}$ is odd. 
	
	We denote by $\inprod{\cS}$ the subspace of $\bZ_2^k$ generated by the vectors of $\cS$ and by $\inprod{\cS}^\perp \defeq \{t\subseteq [k] \mid \forall s\in \inprod{\cS}, \inprod{s,t}=0\}$ its orthogonal subspace.
	
	Observe that, for every $s_1,s_2,t\subseteq [k]$, if $\inprod{s_1,t}=\inprod{s_2,t}=0$, then $\inprod{s_1+s_2, t}=0$.
	Consequently, we have $t\in \inprod{\cS}^\perp$ if and only if $\inprod{s,t}=0$ for every $s\in \cS$.
	We deduce that \[ N_{R_k}(A^k[\cS])= B^k\setminus B^k[\inprod{\cS}^{\perp}]. \]
	By the rank-nullity theorem, we have $\dim \inprod{\cS} + \dim \inprod{\cS}^{\perp}=k$.
	By \Cref{claim:private:neighbor}, for every $s\in \cS$, there exists $t\subseteq [k]$ such that $\inprod{s,t}=1$ and $\inprod{s',t}=0$ for every $s'\in \cS \setminus \{s\}$.
	We deduce that $\dim \inprod{\cS}=\abs{\cS}$ and thus $\dim \inprod{\cS}^{\perp}=k-\abs{\cS}$.
	As $N_{R_k}(A^k[\cS])= B^k\setminus B^k[\inprod{\cS}^{\perp}]$, we conclude that $\abs{N_{R_k}(A^k[\cS_1])} = \abs{\bZ^k_2} - \abs{\inprod{\cS}^{\perp}}= 2^{k} - 2^{k-\abs{\cS}}$.
\end{proof}

\section{Reduction for Independent Set}
\label{sec:IS}

Our reduction is from the variant of \textsc{$3$-CNF-SAT} with a square number of variables.
Since, for every $n\in \bN$, there is always a square number between $n$ and $2n+1$, we can always add $O(n)$ dummy variables to an instance of \textsc{$3$-CNF-SAT} to ensure a square number of variables. Thus, we have the following easy consequence of ETH~\cite{ImpagliazzoP01}.

\begin{lemma}\label{lem:3SAT:square:ETH}
Unless ETH fails, there is no $2^{o(k^2)} (k+m)^{O(1)}$ time algorithm for \textsc{$3$-CNF-SAT} with $m$ clauses and $k^2$ variables, where $k$ is an integer.
\end{lemma}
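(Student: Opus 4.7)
The plan is to derive the claim by padding the standard ETH hypothesis. Under ETH, there is no $2^{o(n)} (n+m)^{O(1)}$ time algorithm for \textsc{$3$-CNF-SAT} on instances with $n$ variables and $m$ clauses~\cite{ImpagliazzoP01}. Given such an instance $\varphi$, I would set $k \defeq \lceil \sqrt{n}\, \rceil$, so that $n \le k^2$ and $k \le \sqrt{n}+1$, which yields $k^2 \le n + 2\sqrt{n} + 1 \le 2n+1$ for all sufficiently large $n$ (small cases can be solved in constant time). I then construct an equivalent instance $\varphi'$ by adjoining $k^2 - n$ fresh dummy variables that do not appear in any clause of $\varphi$; the clause set and its size $m$ are unchanged, and $\varphi'$ is satisfiable if and only if $\varphi$ is satisfiable. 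The resulting instance has exactly $k^2$ variables and $m$ clauses.

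Suppose for contradiction that there is a $2^{o(k^2)} (k+m)^{O(1)}$ time algorithm for \textsc{$3$-CNF-SAT} on instances with $k^2$ variables and $m$ clauses. Applying it to $\varphi'$ decides the satisfiability of $\varphi$ in time
\[
2^{o(k^2)}(k+m)^{O(1)} \;\le\; 2^{o(2n+1)}(\sqrt{n}+1+m)^{O(1)} \;=\; 2^{o(n)}(n+m)^{O(1)},
\]
which contradicts ETH.

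There is no real obstacle here: the argument is just a padding, and the only point worth checking is that adding isolated (clause-free) variables preserves satisfiability while leaving the clause count fixed, so that the relevant parameters $k$ and $m$ behave as required.
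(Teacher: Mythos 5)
Your proof is correct and takes essentially the same approach as the paper: the paper gives only a one-sentence padding argument ("there is always a square number between $n$ and $2n+1$, so add $O(n)$ dummy variables"), and your write-up fills in exactly that argument with the details made explicit. The only point worth being slightly more careful about is the opening invocation of ETH: ETH as stated rules out $2^{o(n)}$-time algorithms for $3$-CNF-SAT with $n$ variables, and the $(n+m)^{O(1)}$ factor is harmless only because $m = O(n^3)$ after removing duplicate clauses; this is standard but deserves a word if you want the argument to be fully self-contained.
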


We start from an instance $\phi$ of \textsc{$3$-CNF-SAT} with $m$ clauses $C_1,\dots,C_m$ and a set of $k^2$ variables $\var(\phi)\defeq\{v_{i,j} \mid i\in  [k], j\in [k+1,2k]\}$.
The main idea of our construction is to use the following bijection between the assignments of $\var(\phi)$ and some collections in $\sF_{2k}$.

\begin{definition}\label{def:interpretation:sF2k}
	Let $\cS\defeq\{s\subseteq [2k] \mid \abs{s\cap [k]} = 1\}$.
	For each $i\in [k]$, we denote by $\cS_i$ the set $\{s\in \cS \mid s\cap [k] = \{i\}\}$.
	For every assignment $f : \var(\phi) \to \{0,1\}$, we denote by $\cS_f$ the collection containing the sets $s_1\in \cS_1,s_2\in \cS_2,\dots, s_k\in \cS_k$, where for every $i\in [k]$
	\[
	s_i=\{i\} \cup \{ j\in [k+1,2k] \mid f(v_{i,j})=1 \}.
	\]
	%
	Given a literal $\ell\in \{v_{i,j}, \neg v_{i,j}\}$, we define $\cS_{\ell}$ as the set $\{ s\in \cS_i \mid j\in s \}$ if $\ell=v_{i,j}$ and $\{s\in \cS_i \mid j\notin s\}$ if $\ell=\neg v_{i,j}$.
\end{definition}

For example with $k=2$, the interpretation $f$ which sets the variables $v_{1,3},v_{1,4}$ and $v_{2,4}$ to true is associated with the collection $\cS_f$ containing $\{1,3,4\}$ and $\{2,4\}$.

\begin{observation}\label{obs:interpretation:literal}
	For every interpretation $f : \var(\phi) \to \{0,1\}$ and literal $\ell \in \{v_{i,j},\neg v_{i,j} \mid v_{i,j} \in \var(\phi)\}$, we have $f(\ell)=1$ if and only if $\cS_f\cap \cS_{\ell} \neq \emptyset$ if and only if $\cS_f\cap \cS_{\neg \ell}=\emptyset$.
\end{observation}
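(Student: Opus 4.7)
The plan is to prove both equivalences by directly unfolding the definitions from \cref{def:interpretation:sF2k}. The key structural observation is that $\cS_f$ has been constructed to contain exactly one set from each $\cS_i$ (for $i \in [k]$): namely the set $s_i = \{i\} \cup \{j\in[k+1,2k] \mid f(v_{i,j})=1\}$. So if the literal $\ell$ involves the variable $v_{i,j}$, then, since $\cS_\ell \subseteq \cS_i$ and $\cS_{\neg\ell} \subseteq \cS_i$, intersecting $\cS_f$ with either $\cS_\ell$ or $\cS_{\neg \ell}$ reduces to asking whether $s_i \in \cS_\ell$ (resp.~$s_i \in \cS_{\neg \ell}$).

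For the first equivalence $f(\ell)=1 \Leftrightarrow \cS_f \cap \cS_\ell \neq \emptyset$, I would split into the two cases $\ell = v_{i,j}$ and $\ell = \neg v_{i,j}$. In the positive case, $\cS_\ell = \{s \in \cS_i \mid j \in s\}$, so $\cS_f \cap \cS_\ell \neq \emptyset$ iff $j \in s_i$, which by the definition of $s_i$ holds iff $f(v_{i,j}) = 1$, i.e.~$f(\ell)=1$. The negated case is symmetric: $\cS_{\neg v_{i,j}} = \{s \in \cS_i \mid j \notin s\}$, so $\cS_f \cap \cS_\ell \neq \emptyset$ iff $j \notin s_i$ iff $f(v_{i,j})=0$ iff $f(\neg v_{i,j})=1$.

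For the second equivalence $\cS_f \cap \cS_\ell \neq \emptyset \Leftrightarrow \cS_f \cap \cS_{\neg\ell} = \emptyset$, I would use the fact that $\{\cS_\ell, \cS_{\neg\ell}\}$ forms a partition of $\cS_i$ (every $s \in \cS_i$ either contains $j$ or does not), combined with the fact that $\cS_f \cap \cS_i = \{s_i\}$ is a singleton. Hence the single element $s_i$ lies in exactly one of $\cS_\ell, \cS_{\neg\ell}$, giving the equivalence immediately.

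I do not foresee any real obstacle here — the observation is essentially a bookkeeping statement whose purpose is to fix notation for translating Boolean assignments to collections in $\sF_{2k}$, and both equivalences follow from the definitions in one or two lines each. The only thing to be careful about is to treat the positive and negative literal cases explicitly so that the reader can verify that the definition of $\cS_\ell$ has been set up consistently with that of $\cS_f$.
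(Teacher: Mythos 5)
Your proof is correct. The paper states this as an observation without proof, and your argument is precisely the intended direct verification: $\cS_f \cap \cS_i = \{s_i\}$ is a singleton, $\cS_\ell$ and $\cS_{\neg\ell}$ partition $\cS_i$ according to membership of $j$, and membership of $j$ in $s_i$ matches $f(v_{i,j})$ by construction.
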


\subparagraph{High level description of the reduction.} We consider a modified version of the universal $2k$-rank cut $R_{2k}$ which is obtained by: (1)~removing the vertices that are in $A^{2k}$ but not in $A^{2k}[\cS]$ and (2)~making  $A^{2k}[\cS_i]$ a clique for every $i\in [k]$.
This way in the graph induced by $A^{2k}[\cS]$, every maximal independent set is of the form $A^{2k}[\cS_f]$ with $f$ an assignment of $\var(\phi)$.

We make $m$ copies $A_1, \ldots, A_m$ of such $A^{2k}[\cS]$ and for each $i\in [m]$, we create a simple clause gadget that is adjacent to some vertices of $A_i$.
For each $i\in [m]$, the clause gadget for $C_i$ is simply a triangle whose vertices are associated with the literals of $C_i$. For each literal $\ell$ of $C_i$, the vertex associated with $\ell$ is adjacent to the vertices in the $A_i$ associated with the sets in $\cS_{\neg\ell}$.
See \Cref{fig:isoverview} for an overview of this construction and \Cref{fig:is:clause:gadget} for an example of clause gadget.

Finally, we define a vertex-weight function (that can be emulated in the unweighted setting by adding false twins) in such a way that for any independent set $I$ of maximum weight, there exists an interpretation $f$ of $\var(\phi)$ such that $I$ contains the copies of $A^{2k}[\cS_f]$.
To this end, we actively use the fact that for all interpretations $f,g$ of $\var(\phi)$, we have $\cS_f,\cS_g\in \sF_{2k}$ and thus $A^{2k}[\cS_f]$ and $A^{2k}[\cS_g]$ have the same neighborhood in $R_{2k}$ if and only if $f=g$.

\begin{figure}[hbt]
	\centering
	\includegraphics[width=0.7\linewidth]{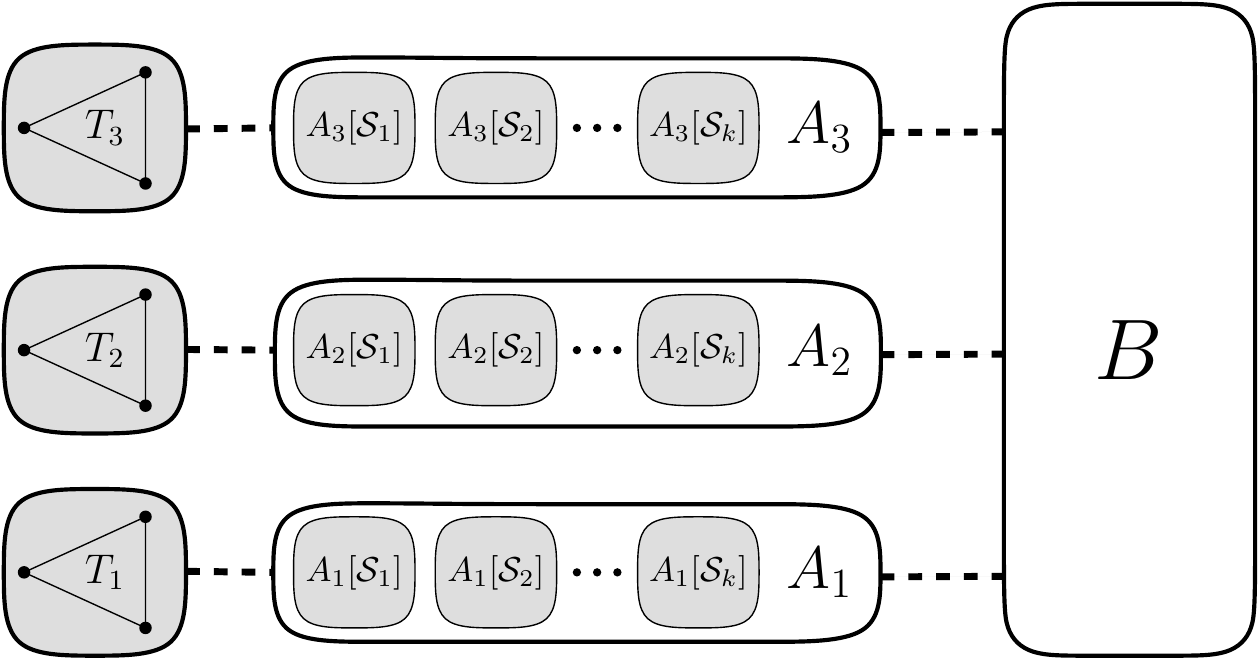}
	\caption{Overview of the reduction for \textsc{Independent Set} with $m=3$. The gray areas represent cliques and dotted lines indicates the existence of edges between two sets of vertices.}
	\label{fig:isoverview}
\end{figure}

\begin{figure}[hbt]
	\centering
	\includegraphics[width=0.8\linewidth]{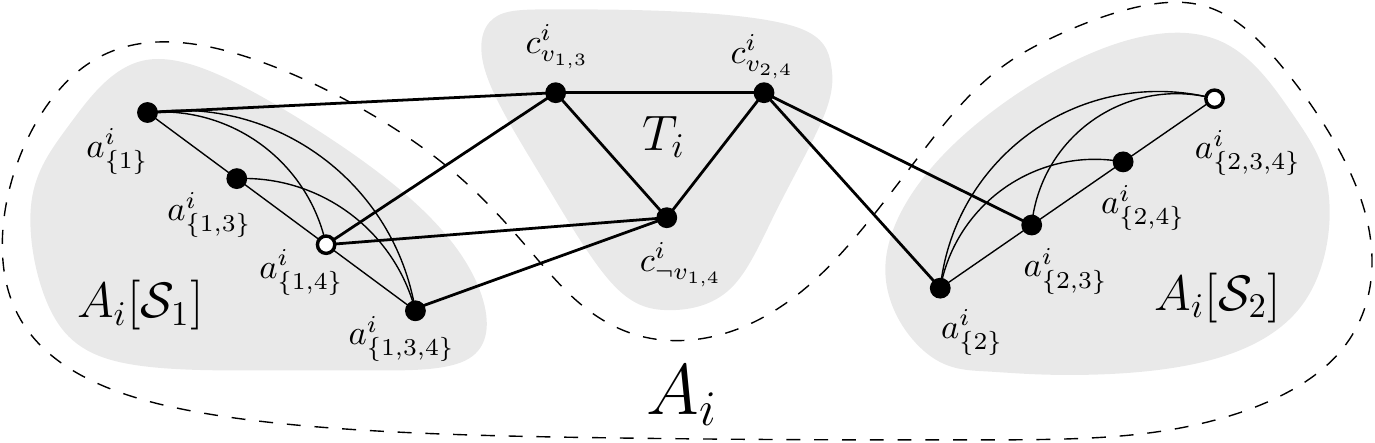}
	\caption{Example of clause gadget with $k=2$ and a clause $C_i=\{v_{1,3},\neg v_{1,4}, v_{2,4}\}$.
	The independent set containing the white-filled vertices is $A_i[\cS_f]$ for the interpretation $f$ with $f^{-1}(1)=\{v_{1,4},v_{2,3},v_{2,4}\}$ and $f^{-1}(0)=\{v_{1,3}\}$. This independent set can be extended with the vertex $c^i_{v_{2,4}}$.}
	\label{fig:is:clause:gadget}
\end{figure}

\subparagraph{The construction.}  
We construct a graph $G$ as follows.
We create $m$ copies $A_1,\dots,A_m$ of $A^{2k}[\cS]$, for each $i\in [m]$, we have $A_i\defeq \{a^i_s \mid s \in \cS\}$.
Given $\cS'\subseteq \cS$ and $i\in [m]$, we denote by $A_i[\cS']$ the set $\{a^i_s \in A_i \mid s\in \cS'\}$.
For each $i\in[m]$ and $j\in [k]$, we add edges so that $A_i[\cS_j]$ induces a clique.
Thanks to these $k$ cliques, we have the following relation between the independent sets of $G[A_i]$, the subsets of $\cS$ and the interpretations of $\var(\phi)$.

\begin{lemma}\label{lemma:IS:interpretation}
	Let $i\in [m]$ and $\cS'\subseteq \cS$. If $A_i[\cS']$ is an independent set of $G[A_i]$, then $\cS'\in\sF_{2k}$.
	Moreover, $A_i[\cS']$ is an independent set of $G[A_i]$ of size $k$ if and only if there exists an interpretation $f$ of $\var(\phi)$ such that $\cS'=\cS_f$.
\end{lemma}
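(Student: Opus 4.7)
The plan is to exploit the very simple structure of $G[A_i]$ and then match it against the characterization of $\sF_{2k}$ given by Observation~\ref{obs:sF_k}. The key observation is that the edges added inside $A_i$ by the construction are exactly those making each $A_i[\cS_j]$ a clique, with no edges between different $A_i[\cS_j]$ and $A_i[\cS_{j'}]$. Consequently, $G[A_i]$ is a disjoint union of the $k$ cliques $A_i[\cS_1],\dots,A_i[\cS_k]$, and a set $A_i[\cS']$ is independent in $G[A_i]$ if and only if $|\cS'\cap \cS_j|\leq 1$ for every $j\in[k]$.

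To prove the first claim, suppose $A_i[\cS']$ is independent and write $\cS'=\{s_{j_1},\dots,s_{j_\ell}\}$ with $j_1<\dots<j_\ell$ being the indices such that $\cS'\cap\cS_{j_r}=\{s_{j_r}\}$. I would apply Observation~\ref{obs:sF_k} with the choice $\alpha_r\defeq j_r\in[2k]$. These integers are pairwise distinct. Since $s_{j_r}\in\cS_{j_r}$, by definition $s_{j_r}\cap[k]=\{j_r\}$, so $s_{j_r}\cap\{\alpha_1,\dots,\alpha_\ell\}=s_{j_r}\cap\{j_1,\dots,j_\ell\}=\{j_r\}=\{\alpha_r\}$ because $\{j_1,\dots,j_\ell\}\subseteq[k]$. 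Moreover $s_{j_r}=\{j_r\}\cup(s_{j_r}\cap[k+1,2k])$, and every element of $s_{j_r}\cap[k+1,2k]$ is at least $k+1>j_r=\alpha_r$, so $s_{j_r}\subseteq\{\beta\in[2k]\mid \alpha_r\leq\beta\}$. Observation~\ref{obs:sF_k} then yields $\cS'\in\sF_{2k}$.

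For the ``moreover'' part, I would argue both directions directly. If $A_i[\cS']$ is independent of size $k$, then $|\cS'\cap\cS_j|\leq 1$ for each of the $k$ values of $j$ while $\sum_j|\cS'\cap\cS_j|=|\cS'|=k$, forcing $|\cS'\cap\cS_j|=1$ for every $j\in[k]$. Denoting the unique element of $\cS'\cap\cS_j$ by $s_j$ and defining $f(v_{j,j'})\defeq 1$ iff $j'\in s_j$ for $j\in[k]$ and $j'\in[k+1,2k]$, one checks from Definition~\ref{def:interpretation:sF2k} that $\cS'=\cS_f$. Conversely, if $\cS'=\cS_f$ for some interpretation $f$, then $\cS_f$ contains exactly one vertex from each $\cS_j$ by construction, so $|\cS_f|=k$ and $A_i[\cS_f]$ is independent.

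I do not expect any real obstacle: the argument is essentially bookkeeping between three equivalent views (the cliques in $G[A_i]$, the selection of one set from each $\cS_j$, and the echelon-form characterization of $\sF_{2k}$ from Observation~\ref{obs:sF_k}). The only thing to be careful about is confirming that no edges were added between distinct $A_i[\cS_j]$'s, which is immediate from the construction paragraph preceding the lemma.
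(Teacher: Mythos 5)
Your proof is correct and takes essentially the same approach as the paper: use the fact that $G[A_i]$ is a disjoint union of the cliques $A_i[\cS_1],\ldots,A_i[\cS_k]$ so that an independent set picks at most one element from each $\cS_j$, then invoke Observation~\ref{obs:sF_k} for the first claim and a direct counting/bijection argument for the ``moreover'' part. The only difference is that you spell out the verification of Observation~\ref{obs:sF_k}'s conditions (the choice of $\alpha_r=j_r$, the intersection check, and the inequality $s_{j_r}\subseteq\{\beta\mid \alpha_r\leq\beta\}$) in more detail than the paper, which simply asserts the deduction; this extra detail is accurate and harmless.
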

\begin{proof}
	Let $i\in [m]$ and $\cS'=\{s_1,\dots,s_{r}\}\subseteq \cS$ such that $A_i[\cS']$ is an independent set of $G$.
	By construction of $G[A_i]$, we know that $A_i[\cS']$ contains at most one vertex from $A_i[\cS_j]$ for each $j\in [k]$.
	Thus, there exist pairwise distinct integers $\alpha_1,\dots,\alpha_{r}\in [k]$ such that for every $j\in [r]$, we have $\{\alpha_j\} \subseteq s_j\subseteq \{\alpha_j\}\cup [k+1,2k]$.
	By \Cref{obs:sF_k}, we deduce that $\cS'$ belongs to the collection $\sF_{2k}$.
	
	It is easy to see that $\abs{\cS'}=k$ if and only if $\cS'=\{s_1,\dots,s_k\}$ with $\{j\} \subseteq s_j \subseteq \{j\}\cup [k+1,2k]$ for every $j\in [k]$ and this is equivalent to $\cS'=\cS_f$ with $f^{-1}(1)=\{v_{i,j}\mid i\in [k] \land j\in s_i  \cap [k+1,2k] \}$.	
\end{proof}

We create $B =B^{2k}\defeq \{b_s \mid s\subseteq [2k]\}$.
For every $i\in [m]$, $s\in \cS$ and $t\subseteq [2k]$, if $\abs{s\cap t}$ is odd, we make $b_t$ adjacent to $a_s^i$. Consequently, $G[A_i, B_i]$ is isomorphic to $R_{2k}[A^{2k}[\cS], B^{2k}]$. 
We deduce the following observations from \Cref{lemma:IS:interpretation} and our results on $R_{2k}$.

\begin{observation}\label{claim:IS:neighborhood:different}
	Let $i,j\in [m]$ and $f,g$ be two assignments of $\var(\phi)$. We have $N(A_{i}[\cS_f])\cap B = N(A_{j}[\cS_g]) \cap B$ if and only if $f=g$.
\end{observation}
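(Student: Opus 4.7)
The plan is to reduce this observation directly to \cref{lem:different:neighborhood} applied to the universal $2k$-rank cut $R_{2k}$, exploiting the explicit isomorphism $G[A_i, B] \cong R_{2k}[A^{2k}[\cS], B^{2k}]$ that is built into the construction.

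First I would note that by the construction of the edges between $A_i$ and $B$, the map sending $a_s^i \mapsto a_s$ and $b_t \mapsto b_t$ is a graph isomorphism from $G[A_i, B]$ to $R_{2k}[A^{2k}[\cS], B^{2k}]$, since in both graphs $b_t$ is adjacent to the vertex indexed by $s$ precisely when $|s \cap t|$ is odd. In particular, for every $\cS' \subseteq \cS$ and every $i \in [m]$, we have $N_G(A_i[\cS']) \cap B = \{b_t \in B \mid b_t \in N_{R_{2k}}(A^{2k}[\cS'])\}$, and this set depends only on $\cS'$ (and not on $i$).

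For the ``if'' direction, if $f = g$ then $\cS_f = \cS_g$ by definition, so the previous paragraph immediately yields $N(A_i[\cS_f]) \cap B = N(A_j[\cS_g]) \cap B$.

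For the ``only if'' direction, I would argue the contrapositive. Suppose $f \neq g$; then by \cref{def:interpretation:sF2k} there is some $i \in [k]$ and some $j \in [k+1, 2k]$ with $f(v_{i,j}) \neq g(v_{i,j})$, which forces the (unique) element of $\cS_f \cap \cS_i$ to differ from the (unique) element of $\cS_g \cap \cS_i$, and hence $\cS_f \neq \cS_g$. By \cref{lemma:IS:interpretation} we have $\cS_f, \cS_g \in \sF_{2k}$, so \cref{lem:different:neighborhood} gives $N_{R_{2k}}(A^{2k}[\cS_f]) \neq N_{R_{2k}}(A^{2k}[\cS_g])$. Transporting this inequality through the isomorphism above yields $N(A_i[\cS_f]) \cap B \neq N(A_j[\cS_g]) \cap B$.

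There is no real obstacle here: the work is already done in \cref{lem:different:neighborhood}, and the only thing to double-check is that the assignment-to-collection map $f \mapsto \cS_f$ is injective (so that $f \neq g$ indeed implies $\cS_f \neq \cS_g$), which is transparent from \cref{def:interpretation:sF2k}.
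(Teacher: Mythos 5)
Your proposal is correct and follows essentially the same route the paper intends: reduce to \cref{lem:different:neighborhood} via the isomorphism $G[A_i,B]\cong R_{2k}[A^{2k}[\cS],B^{2k}]$, using \cref{lemma:IS:interpretation} (or equivalently \cref{obs:sF_k}) to certify $\cS_f,\cS_g\in\sF_{2k}$ and the evident injectivity of $f\mapsto\cS_f$. The paper states the observation without proof as a direct deduction from those same ingredients, so there is nothing substantively different here.
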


\begin{observation}\label{claim:IS:size}
	For every $i\in [m]$ and $\cS'\subseteq \cS$ such that $A_i[\cS']$ is an independent set of $G$, we have 
	$ \abs{  N(A_{i}[\cS']) \cap B} = 2^{2k} - 2^{2k-\abs{\cS'}}. $
\end{observation}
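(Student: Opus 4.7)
The plan is to combine two previously established facts: \Cref{lemma:IS:interpretation} (which tells us that every independent set $A_i[\cS']$ in $G[A_i]$ corresponds to a collection $\cS' \in \sF_{2k}$), and \Cref{lem:neighborhood:size} (which computes the size of the neighborhood of $A^{2k}[\cS']$ in $R_{2k}$ when $\cS' \in \sF_{2k}$).

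First, I would note that by construction, the bipartite graph $G[A_i, B]$ is isomorphic to the bipartite subgraph $R_{2k}[A^{2k}[\cS], B^{2k}]$ via the natural map $a^i_s \mapsto a_s$ and $b_t \mapsto b_t$: this is precisely how the edges between $A_i$ and $B$ were defined (namely $a^i_s b_t \in E(G)$ iff $|s \cap t|$ is odd). Under this isomorphism, the set $A_i[\cS']$ corresponds to $A^{2k}[\cS']$, so $N(A_i[\cS']) \cap B$ has the same cardinality as $N_{R_{2k}}(A^{2k}[\cS'])$.

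Next, since $A_i[\cS']$ is an independent set of $G$ (and in particular of $G[A_i]$), \Cref{lemma:IS:interpretation} yields $\cS' \in \sF_{2k}$. Thus \Cref{lem:neighborhood:size} applies with $k$ replaced by $2k$, giving $|N_{R_{2k}}(A^{2k}[\cS'])| = 2^{2k} - 2^{2k - |\cS'|}$, which is exactly the claimed quantity.

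Since this is a straightforward composition of already-proven lemmas, I do not foresee any real obstacle; the only care needed is to make the identification $G[A_i,B] \cong R_{2k}[A^{2k}[\cS], B^{2k}]$ explicit so that the hypothesis $\cS' \in \sF_{2k}$ (which comes from the independence in $G[A_i]$, i.e., from the added cliques on each $A_i[\cS_j]$) legitimately feeds into \Cref{lem:neighborhood:size}.
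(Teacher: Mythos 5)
Your proof is correct and matches the paper's intended argument: the paper explicitly states this observation is deduced from \Cref{lemma:IS:interpretation} together with the structural results on $R_{2k}$, which is precisely the chain $\cS' \in \sF_{2k}$ (via \Cref{lemma:IS:interpretation}) followed by \Cref{lem:neighborhood:size} applied to $R_{2k}$, transported through the isomorphism $G[A_i,B] \cong R_{2k}[A^{2k}[\cS], B^{2k}]$. Nothing is missing.
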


For every $i\in [m]$ with $C_i=\{\ell_1,\ell_2,\ell_3\}$, we create a triangle induced by a set $T_i$ of three new vertices $c^i_{\ell_1},c^i_{\ell_2}$ and $c^i_{\ell_3}$.
For each $j\in [3]$, we make $c^i_{\ell_j}$ adjacent to all the vertices in $A_i[\cS_{\neg \ell_j}]$.
Thanks to \Cref{obs:interpretation:literal} and these edges, each independent sets associated with an interpretation that satisfies $C_i$ can be extended with one of the vertices in $T_i$.

\begin{observation}\label{obs:IS:clause:gadget}
	For every assignment $f$ of $\var(\phi)$ and $i\in [m]$ with $C_i=\{\ell_1,\ell_2,\ell_3\}$, we have $T_i \setminus N(A_{i}[\cS_f])\neq \emptyset$ if and only if $f$ satisfies $C_i$.
\end{observation}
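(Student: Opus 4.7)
The plan is to unpack the definition of $T_i$ and the adjacencies between $T_i$ and $A_i$ vertex by vertex, and then to apply \Cref{obs:interpretation:literal} literal by literal. More concretely, fix $i \in [m]$ with $C_i = \{\ell_1, \ell_2, \ell_3\}$ and an assignment $f$ of $\var(\phi)$. The goal is to show that $T_i \setminus N(A_i[\cS_f]) \ne \emptyset$ if and only if $f(\ell_j) = 1$ for some $j \in [3]$, which by definition is precisely the condition that $f$ satisfies $C_i$.

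First I would argue the elementary equivalence, for each fixed $j \in [3]$, that $c^i_{\ell_j} \notin N(A_i[\cS_f])$ if and only if $\cS_{\neg \ell_j} \cap \cS_f = \emptyset$. This is immediate from the construction: by definition $c^i_{\ell_j}$ is adjacent in $G$ exactly to the vertices of $A_i[\cS_{\neg \ell_j}]$, so $c^i_{\ell_j}$ has a neighbor in $A_i[\cS_f]$ if and only if $A_i[\cS_{\neg \ell_j}] \cap A_i[\cS_f] \ne \emptyset$, which by the bijection $s \mapsto a^i_s$ is the same as $\cS_{\neg \ell_j} \cap \cS_f \ne \emptyset$.

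Next, \Cref{obs:interpretation:literal} applied to the literal $\neg \ell_j$ gives $f(\neg \ell_j) = 1 \iff \cS_f \cap \cS_{\neg \ell_j} \ne \emptyset$, hence $\cS_f \cap \cS_{\neg \ell_j} = \emptyset \iff f(\ell_j) = 1$. Combining this with the previous step yields $c^i_{\ell_j} \in T_i \setminus N(A_i[\cS_f])$ if and only if $f(\ell_j) = 1$.

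Finally, taking the disjunction over $j \in [3]$ gives $T_i \setminus N(A_i[\cS_f]) \ne \emptyset$ if and only if there exists $j \in [3]$ with $f(\ell_j) = 1$, i.e., $f$ satisfies $C_i$. There is no real obstacle here; the statement is essentially a direct rewriting of the construction combined with \Cref{obs:interpretation:literal}.
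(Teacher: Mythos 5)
Your proof is correct and is exactly the argument the paper leaves implicit (the statement is presented as an observation with no explicit proof, only the remark ``Thanks to \Cref{obs:interpretation:literal} and these edges\ldots''). One minor imprecision: $c^i_{\ell_j}$ is adjacent not only to $A_i[\cS_{\neg\ell_j}]$ but also to the other two triangle vertices of $T_i$; this does not affect your argument, since deciding whether $c^i_{\ell_j}\in N(A_i[\cS_f])$ only requires knowing $N(c^i_{\ell_j})\cap A_i = A_i[\cS_{\neg\ell_j}]$, which is what you actually use.
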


Finally, we define the weight function $w : V(G) \to \bN$ such each vertex $v$ in $A_1\cup \dots \cup A_m$ have weight $w(v)\defeq 2^{2k} = \abs{B}$ and all the other vertices have weight $1$.
The purpose of $w$ is to guarantee that maximum independent set of $G$ contains $k$ vertices in each $A_i$.

We are ready to prove the correctness of our reduction.

\begin{lemma}\label{lem:SAT:IS}
	If $\phi$ is a satisfiable 3-CNF-SAT formula, then $G$ admits an independent set of weight $2^{2k}km + 2^{k}+m$.
\end{lemma}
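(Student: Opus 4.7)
Given a satisfying assignment $f$ of $\phi$, the natural candidate independent set $I$ has three parts: (i) the ``copies of $\cS_f$'' $A_1[\cS_f] \cup \dots \cup A_m[\cS_f]$, (ii) the common non-neighborhood in $B$ of these copies, i.e.\ $B \setminus \bigcup_{i\in[m]} N(A_i[\cS_f])$, and (iii) for each clause $C_i$ a single vertex of $T_i$ that is not adjacent to $A_i[\cS_f]$. The plan is to verify that $I$ is an independent set and then to compute its weight using the observations from Section~\ref{sec:fullrank} together with \Cref{lemma:IS:interpretation,claim:IS:neighborhood:different,claim:IS:size,obs:IS:clause:gadget}.

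First I would assemble the pieces. For (i), since $f$ is an assignment, $\cS_f$ is built by choosing one set from each $\cS_j$, so $A_i[\cS_f]$ is independent in $G[A_i]$ and has size $k$ by \Cref{lemma:IS:interpretation}. There are no edges between $A_i$ and $A_j$ for $i\neq j$ by construction, so (i) by itself is independent and has $km$ vertices of weight $2^{2k}$, contributing $2^{2k}km$. For (ii), \Cref{claim:IS:neighborhood:different} gives $N(A_i[\cS_f])\cap B = N(A_j[\cS_f])\cap B$ for all $i,j$, so the ``common non-neighborhood'' is simply $B\setminus N(A_1[\cS_f])$; by \Cref{claim:IS:size} this set has size $2^{2k}-(2^{2k}-2^{2k-k}) = 2^k$. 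Vertices of $B$ have weight $1$ and there are no edges inside $B$ (the graph $G$ only has edges within $A_i$-cliques, between $A_i$ and $B$, between $A_i$ and $T_i$, and within each triangle $T_i$), so adding (ii) keeps the set independent and adds $2^k$ to the weight. For (iii), \Cref{obs:IS:clause:gadget} states that $f$ satisfies $C_i$ implies $T_i\setminus N(A_i[\cS_f])\neq\emptyset$; pick one such vertex $t_i$ per clause. Each $t_i$ has weight~$1$, giving $m$.

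Next I would check independence of the combined set $I$. The only non-trivial checks are: no edge between a chosen $t_i$ and (i) or (ii). By choice of $t_i$, it is not adjacent to $A_i[\cS_f]$, and the construction places no edges between $t_i$ and $A_j$ for $j\neq i$, nor between $T_i$ and $B$. The triangles $T_i$ for different $i$ share no vertex and have no edges across, so the $t_i$'s form an independent subset. Hence $I$ is independent.

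Summing the three contributions gives the total weight $2^{2k}km + 2^{k} + m$, as required. I expect the main (minor) obstacle to be bookkeeping: making sure the common $B$-non-neighborhood has exactly size $2^k$, which follows cleanly from \Cref{claim:IS:size} with $|\cS_f|=k$, and that the $T_i$ vertices chosen by \Cref{obs:IS:clause:gadget} truly avoid all of $A_1[\cS_f]\cup\dots\cup A_m[\cS_f]$ (for which it suffices to note that each $T_i$ is only adjacent to vertices of $A_i$, by construction).
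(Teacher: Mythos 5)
Your proof is correct and follows essentially the same approach as the paper's: you construct the same independent set $I$ (the $A_i[\cS_f]$'s, their common $B$-non-neighborhood, and one clause-gadget vertex per triangle selected via \Cref{obs:IS:clause:gadget}), verify independence using the same lemmas and observations, and compute the weight the same way. The paper is simply more terse about the independence check, citing the relevant lemmas without spelling out each case as you do.
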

\begin{proof}
	Suppose $\phi$ admits a satisfying assignment $f$.
	For each $i\in [m]$, let $\ell^i$ be a literal of $C_i$ such that $f(\ell^i)=1$.
	Let $I$ be the set of vertices that contains (1)~the vertices in $B\setminus N(A_1[\cS_f])=\dots=B\setminus N(A_m[\cS_f])$ and (2)~for every $i\in [m]$ the vertices in $A_i[\cS_f] \cup \{ c^i_{\ell^i}\}$.
		
	Lemmas~\ref{lemma:IS:interpretation} and \ref{claim:IS:size} and Observations~\ref{obs:IS:clause:gadget} and \ref{claim:IS:neighborhood:different} imply that $I$ is an independent set of $G$.
	For every $i\in [m]$, we have $\abs{A_i\cap I}=k$ and by \Cref{claim:IS:size} we have $B\cap I=\abs{B\setminus N(A_1[\cS_f])}=2^k$.
	Hence, the weight of $I$ is $2^{2k}km + 2^{k}+m$.
\end{proof}

\begin{lemma}\label{lem:IS:SAT}
	If $G$ admits an independent set of weight at least $2^{2k}km + 2^{k}+m$, then $\phi$ is a satisfiable 3-CNF-SAT formula.
\end{lemma}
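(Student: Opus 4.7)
The plan is to argue by a careful accounting of the weight of $I$, splitting it according to the vertex classes $A \defeq A_1 \cup \dots \cup A_m$, $B$, and $T \defeq T_1 \cup \dots \cup T_m$, in order to pin down the structure of $I$ and then extract a satisfying assignment from it.

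First, I would bound $|I \cap A_i| \le k$ for every $i \in [m]$ (because $A_i$ is partitioned into $k$ cliques $A_i[\cS_1], \dots, A_i[\cS_k]$), $|I \cap T_i| \le 1$ (because $T_i$ is a triangle), and $|I \cap B| \le 2^{2k}$. Since an $A$-vertex weighs $2^{2k}$ while each vertex of $B \cup T$ weighs $1$, a simple arithmetic argument (losing one $A$-vertex costs $2^{2k}$, but the total available weight in $B \cup T$ is only $2^{2k} + m$) forces $|I \cap A_i| = k$ for every $i$, and also forces $|I \cap B| + |I \cap T| \geq 2^k + m$. Applying \Cref{lemma:IS:interpretation} to each $A_i$, I obtain an assignment $f_i$ of $\var(\phi)$ with $I \cap A_i = A_i[\cS_{f_i}]$.

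The main obstacle is then to show that all $f_i$ coincide, which is where \Cref{claim:IS:neighborhood:different} and \Cref{claim:IS:size} do the heavy lifting. Suppose toward a contradiction that $f_i \neq f_j$ for some $i \neq j$. By \Cref{claim:IS:neighborhood:different}, the two neighborhoods $N(A_i[\cS_{f_i}]) \cap B$ and $N(A_j[\cS_{f_j}]) \cap B$ differ, and by \Cref{claim:IS:size} each has size exactly $2^{2k} - 2^k$. Therefore their union strictly exceeds $2^{2k} - 2^k$, so the set of $B$-vertices with no neighbor in $I \cap A$ has size strictly less than $2^k$. Since $I \cap B$ is contained in that set, $|I \cap B| < 2^k$, while $|I \cap T| \le m$, contradicting $|I \cap B| + |I \cap T| \geq 2^k + m$. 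Hence $f_1 = f_2 = \dots = f_m$; call this common assignment $f$.

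Finally, I would argue that $f$ satisfies $\phi$. From the previous step, $|I \cap B| \leq |B \setminus N(A_1[\cS_f])| = 2^k$, so combined with $|I \cap B| + |I \cap T| \geq 2^k + m$ and $|I \cap T_i| \le 1$ we get $|I \cap T_i| = 1$ for every $i \in [m]$. Let $c^i_{\ell^i}$ be the unique vertex of $I \cap T_i$. Because $c^i_{\ell^i}$ is adjacent to every vertex of $A_i[\cS_{\neg \ell^i}]$ and $I$ is independent, $A_i[\cS_{\neg \ell^i}] \cap A_i[\cS_f] = \emptyset$, i.e.\ $\cS_f \cap \cS_{\neg \ell^i} = \emptyset$. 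By \Cref{obs:interpretation:literal} this gives $f(\ell^i) = 1$, so $f$ satisfies $C_i$ for every $i$, and therefore $\phi$ is satisfiable.
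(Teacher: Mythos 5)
Your proof is correct and follows the same overall strategy as the paper: force $|I \cap A_i| = k$ for every $i$, extract interpretations $f_i$ via \Cref{lemma:IS:interpretation}, use \Cref{claim:IS:size} together with the distinct-neighborhoods property (\Cref{claim:IS:neighborhood:different}) to conclude $f_1 = \dots = f_m$, and finish with the clause gadget via \Cref{obs:interpretation:literal}. The one genuine variation is your treatment of the first step: the paper assumes $I$ has maximum weight and uses a local exchange argument (if some $A_i[\cS_j]$ is missed, swap in $a^i_s$, whose lost neighbors in $I$ have weight at most $2^{2k-1}+1 < 2^{2k}$). You instead run a direct global accounting: since $w(I\cap B) \le 2^{2k}$ and $w(I\cap T)\le m$, having $|I\cap A| \le km-1$ would cap $w(I)$ at $2^{2k}km + m$, below the stated bound. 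This is a bit cleaner since it avoids invoking maximality of $I$; both styles are standard and lead to the same intermediate conclusion, so the difference is cosmetic rather than structural.
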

\begin{proof}
	Assume that $G$ admits an independent set $I$ of maximum weight with $w(I)\geq 2^{2k}km + 2^{k}+m$.
	First, we assume towards a contradiction that there exists $i\in [m]$ and $j\in [k]$ such that $I\cap A_i[\cS_j]=\emptyset$.
	Let $s\in \cS_j$.
	By construction of $G$, we have $N(a^i_s) \subseteq B \cup A_i[\cS_j]  \cup T_i$.
	By \Cref{claim:IS:size}, we have $\abs{N(a_s^i)\cap B} = 2^{2k-1}$.
	Moreover, since $T_i$ induces a triangle in $G$, $I$ contains at most one of vertex in $T_i$.
	We deduce that 
\[ 	w(N(a_s^i)\cap I) \leq 2^{2k-1}+1 < w(a_s^i)=2^{2k}. \]
	Consequently, $I'=(I\setminus N(a_s^i)) \cup \{a_s^i\}$ is an independent set of $G$ with $w(I)<w(I')$, yielding a contradiction with $I$ being of maximum weight.
	
	From now, we assume that, for every $i\in [m]$ and $j\in [k]$, $I$ contains exactly one vertex in $A_i[\cS_j]$.
	Since each $A_i[\cS_j]$ induces a clique, we deduce that for every $i\in [m]$, we have $\abs{A_i\cap I}=k$.
	By \Cref{lemma:IS:interpretation}, there exist $m$ interpretations $f_1,\dots,f_m$ of $\var(\phi)$ such that, for every $i\in [m]$, we have $I\cap A_i =A_i[\cS_{f_i}]$.
	By \Cref{claim:IS:size}, for every $i\in [m]$, we have 
	\begin{equation}\label{eq:IS:size:neighborhood}
		\abs{N(A_i[\cS_{f_i}])\cap B}=2^{2k}-2^{k}.
	\end{equation}	
	Hence, we have $w(I\cap B)=\abs{I\cap B}\leq 2^{k}$.
	Since each $T_i$ induces a triangle in $G$, we have $w(I\cap T_i)=\abs{I\cap T_i}\leq 1$.
	As $w(I)\geq 2^{2k}km + 2^{k} +m$, we deduce that $I$ has exactly $2^{k}$ vertices in $B$ and 1 in each $T_j$ for $j\in [m]$.
	
	Equation~\ref{eq:IS:size:neighborhood} and  $\abs{I\cap B_i}=2^{k}$ imply that the neighborhoods of $I\cap A_1, I\cap A_2,\dots,I\cap A_m$ in $B$ is the same.
	From \Cref{claim:IS:neighborhood:different}, we deduce that $f_1=f_2=\dots=f_m$.
	Since $I$ contains exactly one vertex in each $T_i$, we conclude from \Cref{obs:IS:clause:gadget} that $f_1=\dots=f_m$ satisfies every clause of $\phi$.
\end{proof}

\begin{lemma}\label{lem:IS:rw}
	The linear rank-width of $G$ is at most $2k+4$.
\end{lemma}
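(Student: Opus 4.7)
The plan is to exhibit a linear ordering $\sigma$ of $V(G)$ whose every prefix has cut-rank at most $2k+4$. I will take $\sigma$ to enumerate, in order,
\[
T_1,\ A_1[\cS_1],\ \ldots,\ A_1[\cS_k],\ T_2,\ A_2[\cS_1],\ \ldots,\ A_m[\cS_k],\ B,
\]
with arbitrary internal ordering inside each block. The idea is threefold: processing each copy $A_i$ together with its triangle $T_i$ keeps the clause gadgets local (so triangles processed earlier have no edges to the current suffix), sweeping through the cliques $A_i[\cS_1],\dots,A_i[\cS_k]$ one at a time ensures that at most one clique straddles the current prefix boundary, and placing the shared set $B$ last means we pay for $B$-incidence only once.

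For an arbitrary prefix $P$ of $\sigma$, let $i^\star$ and $\ell^\star$ denote respectively the block and the clique containing its boundary, and set $\mathcal{A} = A_1 \cup \cdots \cup A_m$ and $\mathcal{T} = T_1 \cup \cdots \cup T_m$. Two applications of \cref{lem:cutcomb}---first splitting $P$ into $P \cap \mathcal{A}$ and $P \cap \mathcal{T}$, then splitting $\comp{P}$ along $B$---give
\[
\rw(P, \comp{P}) \le \rw(P \cap \mathcal{A},\, B) + \rw(P \cap \mathcal{A},\, \comp{P} \setminus B) + \rw(P \cap \mathcal{T},\, \comp{P}).
\]
The first term is at most $\rw(\mathcal{A}, B) = 2k$: each $G[A_i, B]$ is an isomorphic copy of $R_{2k}[A^{2k}[\cS], B^{2k}]$ under the natural identification of $a^i_s$ with $a_s$, so the rows of $M_G(A_i, B)$ and $M_G(A_{i'}, B)$ coincide across copies and the row space of $M_G(\mathcal{A}, B)$ already has full dimension $2k$ inside a single copy (since $\cS$ spans $\mathbb{F}_2^{2k}$). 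The second term picks up edges only from the single clique $A_{i^\star}[\cS_{\ell^\star}]$ straddling the boundary---all other $A$-cliques lie entirely on one side and there are no edges between distinct $A_i$'s---so it is a complete bipartite graph of rank at most $1$. The third term: every $T_i$ with $i < i^\star$ lies entirely in $P$ along with its only neighbors (contained in $A_i \subseteq P$), hence does not contribute; only $T_{i^\star}$ (or a fragment of it) may cross the cut, giving at most $\abs{T_{i^\star}} = 3$.

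Summing the three contributions yields $\rw(P, \comp{P}) \le 2k + 1 + 3 = 2k + 4$. The corner cases---prefixes ending inside some $T_i$, at a clique boundary, before any $A_i$ has been touched, or inside $B$---all fit the same decomposition with at least as small a bound. The main conceptual step is choosing the right ordering so that the local bookkeeping stays tight; once the ordering is fixed, the main technical ingredient is the uniformity $\rw(\mathcal{A}, B) = 2k$ (relying on the fact that the $A_i$'s are identical copies of $A^{2k}[\cS]$ with respect to $B$), after which everything reduces to routine applications of \cref{lem:cutcomb}.
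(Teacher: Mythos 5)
Your proof is correct and takes essentially the same approach as the paper: exhibit an explicit linear order that processes the copies $A_i$ block by block (one clique at a time) with the triangles $T_i$ interleaved and the shared set $B$ at one end, then bound every prefix cut-rank by $2k$ for the $\mathcal{A}$--$B$ contribution plus $1$ for the single straddled clique plus $3$ for the single active triangle. The only differences are cosmetic: the paper places $B$ at the start of the permutation and $T_i$ after $A_i$ (whereas you place $B$ at the end and $T_i$ before $A_i$), and it bounds the prefix cut directly by listing the crossing-edge types rather than invoking \cref{lem:cutcomb} twice.
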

\begin{proof}
    For each $i\in [m]$ and $j\in [k]$, let $\sigma(A_i[\cS_j])$ be an arbitrary permutation of $A_i[\cS_j]$ and $\sigma(A_i)$ be the concatenation of $\sigma(A_i[\cS_1]),\dots,\sigma(A_i[\cS_k])$.
	For each $X\in \{T_1,\dots,T_m\}\cup \{B\}$, let $\sigma(X)$ be an arbitrary permutation of $X$. 
	We define the permutation $\sigma$ of $V(G)$ as the concatenation of $\sigma(B),\sigma(A_1), \sigma(T_1), \sigma(A_2)$, $\sigma(T_2),\dots,\sigma(A_m)$ and $\sigma(T_m)$.
	We claim that $\rw(\sigma)\leq 2k+4$.
	
	Let $(X,\comp{X})$ be a cut of $G$ induced by $\sigma$.
	If $X\cap (A_1\cup\dots\cup A_m)=0$, then $\rw(X,\comp{X})$ is at most $\rw(B,A_1\cup\dots\cup A_m)$.
Now, observe that $G[B,A_1\cup\dots\cup A_m]$ is obtained from the universal $2k$-rank cut $R_{2k}$ by removing some vertices $A^{2k}$ and making copies of the ones we do not remove.
	Consequently, we have $\rw(B,A_1\cup\dots\cup A_m) \leq 2k$.
	
	Suppose now that $X\cap (A_1\cup\dots\cup A_m)\neq \emptyset$. Let $i\in [m]$ and $j\in [k]$ be maximum such that $X\cap A_i[\cS_j]\neq \emptyset$.
	Observe that the only edges of $G[X,\comp{X}]$ are (1)~between $X\cap B$ and $\comp{X}\cap (A_{j+1}\cup\dots\cup A_m)$, (2)~between $A_i[\cS_j]\cap X$ and $A_i[\cS_j]\cap \comp{X}$, (3)~between $X\cap (A_i\cup T_i)$ and $\comp{X}\cap T_i$.
	As $G[A_i]$ is a clique, we have $\rw(A_i[\cS_j]\cap X, A_i[\cS_j]\cap \comp{X})\leq 1$.
	Since $\abs{\comp{X}\cap T_i}\leq 3$, we deduce that $\rw(X,\comp{X})$ is at most $2k + 4$.
	As it holds for any prefix $X$ of $\sigma$, we conclude that $\lrw(G) \leq 2k+4$.
\end{proof}

\begin{theorem}\label{thm:IS}
	There is no algorithm solving \textsc{Weighted Independent Set} in time $2^{o(\lrw(G)^{2})} n^{O(1)}$ unless ETH fails.
\end{theorem}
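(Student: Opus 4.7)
The plan is to combine the four preceding lemmas in a straightforward reduction-from-ETH argument. I would start by taking, for a given integer $k$, an instance $\phi$ of \textsc{$3$-CNF-SAT} with $k^2$ variables and $m$ clauses, so that by \cref{lem:3SAT:square:ETH} no $2^{o(k^2)}(k+m)^{O(1)}$ time algorithm can decide it under ETH. Applying the construction of Section~\ref{sec:IS} yields the weighted graph $G$ together with the target weight $W \defeq 2^{2k}km + 2^k + m$.

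Next I would verify the parameters of the produced instance. The vertex set of $G$ is $B\cup A_1\cup\dots\cup A_m\cup T_1\cup\dots\cup T_m$, whose total size is $2^{2k} + m\cdot k\cdot 2^k + 3m = 2^{O(k)}\cdot\mathrm{poly}(m)$, and the weights are bounded by $2^{2k}$, so the whole instance can be written down in time $2^{O(k)}\cdot\mathrm{poly}(m)$. By \cref{lem:IS:rw}, $\lrw(G)\leq 2k+4$, and by \cref{lem:SAT:IS} together with \cref{lem:IS:SAT}, $\phi$ is satisfiable if and only if $G$ has an independent set of weight at least $W$.

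Now suppose, for contradiction, that \textsc{Weighted Independent Set} admits an algorithm running in time $2^{o(\lrw(G)^2)}\, n^{O(1)}$. Plugging in $\lrw(G)=O(k)$ and $n=2^{O(k)}\cdot\mathrm{poly}(m)$, the combined running time to build $G$ and run the hypothetical algorithm is
\[
2^{o(k^2)}\cdot \bigl(2^{O(k)}\cdot\mathrm{poly}(m)\bigr)^{O(1)} \;=\; 2^{o(k^2)}\cdot(k+m)^{O(1)},
\]
which contradicts \cref{lem:3SAT:square:ETH}. The theorem then follows.

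I do not expect a real obstacle here, since every nontrivial step (the correctness of the gadget via the subspace machinery of Section~\ref{sec:fullrank}, and the linear rank-width bound) has already been discharged in the preceding lemmas. The only thing to be mildly careful about is bookkeeping: making sure the exponential factor $2^{O(k)}$ coming from the size of $B$ and of each $A_i$ is absorbed into $2^{o(k^2)}$, which it is since $O(k)=o(k^2)$, and that the weights (being at most $2^{2k}$) do not inflate the encoding length beyond $2^{O(k)}\mathrm{poly}(m)$.
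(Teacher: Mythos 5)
Your proof is correct and follows essentially the same argument as the paper: construct $G$ from the $3$-CNF formula, invoke \cref{lem:SAT:IS,lem:IS:SAT} for equivalence, \cref{lem:IS:rw} for the bound $\lrw(G)\le 2k+4$, note $|V(G)|=2^{O(k)}\mathrm{poly}(m)$, and conclude by contradicting \cref{lem:3SAT:square:ETH}. Your extra remark about weights being bounded by $2^{2k}$ (so the encoding stays within $2^{O(k)}\mathrm{poly}(m)$) is a small, correct bit of bookkeeping the paper leaves implicit.
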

\begin{proof}
	Assume that there exists a $2^{o(\lrw(G)^{2})} n^{O(1)}$ time algorithm for \textsc{Weighted Independent Set}.
	We prove that it implies the existence of a $2^{o(k^2)} n^{O(1)}$ time algorithm for \textsc{3-SAT} where $k^2$ is the number of variables. This will contradict ETH.
	
	Suppose that we are given a 3-SAT formula $\phi$ with $k^2$ variables and $m$ clauses.
	We construct the graph $G$ described above.
	As $G$ has $2^{2k} + (2^kk+3)m$ vertices, we deduce that we can construct $G$ in time $2^{O(k)} m$.
	
	From \Cref{lem:IS:SAT,lem:SAT:IS}, we know that $G$ admits an independent set of weight at least $2^{2k}km + 2^{k} + m$ if and only if $\phi$ is satisfiable.
	By assumption, we can compute an independent set of $G$ in time $2^{o(\lrw(G)^{2})} n^{O(1)}$.
	By \Cref{lem:IS:rw}, the linear rank-width of $G$ is at most $2k+4$.
	Hence, we can decide whether $\phi$ is satisfiable in time $2^{o(k^2)} n^{O(1)}$.
	This contradicts ETH by \Cref{lem:3SAT:square:ETH}.
\end{proof}

\begin{lemma}\label{lem:IS:unweighted}
	Let $G$ be a graph with a weight function $w : V(G)\to \bN$ and $\sigma$ be a linear decomposition of rank-width $w$. 
	We can construct in time $O(\abs{V(G)}  \max_{v\in V(G)} w(v))$ a graph $G'$ and a linear decomposition $\sigma'$ of $G'$ with rank-width at most $w+1$ such that $G$ admits an independent set $I$ of weight at least $W$ iff $G'$ admits a independent set of size at least $W$.
\end{lemma}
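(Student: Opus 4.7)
\medskip

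\noindent\textbf{Proof proposal.} The plan is to use the classical ``false twin'' replacement: construct $G'$ from $G$ by replacing each vertex $v \in V(G)$ with a set $C_v$ of $w(v)$ pairwise non-adjacent copies, where each copy inherits the neighborhood of $v$ (with copies in different groups $C_u, C_v$ adjacent iff $uv \in E(G)$). The permutation $\sigma'$ is obtained from $\sigma$ by replacing each $v$ with the elements of $C_v$ in an arbitrary order, keeping the relative order of groups as in $\sigma$. Clearly $|V(G')| = \sum_v w(v) \leq |V(G)| \cdot \max_v w(v)$, which gives the stated construction time (producing a suitable representation of $G'$ and $\sigma'$).

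\medskip

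\noindent For the equivalence of solutions, if $I \subseteq V(G)$ is an independent set of weight $\geq W$ in $G$, then $I' = \bigcup_{v \in I} C_v$ is independent in $G'$ (no edges inside $C_v$ by construction; no edges between $C_u$ and $C_v$ for $u,v \in I$ since $uv \notin E(G)$) and has size $\sum_{v\in I} w(v) = w(I) \geq W$. Conversely, given any independent set $I'$ of $G'$, the fact that all vertices of $C_v$ are false twins means $I'$ may be assumed (without decreasing its size) to contain either all of $C_v$ or none of it, for every $v$; then $I = \{v \mid C_v \subseteq I'\}$ is independent in $G$ with $w(I) = |I'| \geq W$.

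\medskip

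\noindent The key step is bounding the linear rank-width of $\sigma'$ by $w+1$. Consider any prefix $X'$ of $\sigma'$. If $X'$ corresponds to a complete prefix of $\sigma$, then $M_{G'}(X', \overline{X'})$ is obtained from $M_G(X, \overline{X})$ by duplicating rows and columns, so its $\mathbb{F}_2$-rank equals $\rw_G(X, \overline{X}) \leq w$. Otherwise $X' = X \cup C$ where $X$ is a complete prefix of $\sigma$ and $C \subsetneq C_v$ for the next vertex $v$ in $\sigma$. Partition the columns of $\overline{X'}$ into the remaining copies $C_v \setminus C$ and the rest. The main observation is that the rows indexed by $C$ are all identical (every copy of $v$ sees exactly the same vertices outside $C_v$, and none inside $C_v$), so they collectively contribute rank at most $1$ to the matrix. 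The remaining rows, indexed by $X$, form a matrix which, after ignoring duplicate columns, is a submatrix of $M_G(X, \overline{X})$ extended by the column of $v$, and this is precisely $M_G(X, \overline{X})$ up to duplication/restriction of columns; hence it has rank at most $w$. Using that rank is subadditive over row partitions, the total rank of $M_{G'}(X', \overline{X'})$ is at most $w+1$.

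\medskip

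\noindent The main obstacle is the second case in the rank-width analysis, namely a cut that splits a group $C_v$ of copies. A naive bound would give $w+2$ (one for the $v$ column in the $X$-rows block, one for the row-of-$v$ contribution). The point is that these two ``extra'' rank-$1$ contributions are really the same information about $v$'s neighborhood, so careful bookkeeping (either via a block-matrix argument or by invoking \cref{lem:cutcomb} to split off the $C$-rows) collapses them into a single $+1$. Once this is handled, the remainder of the argument is routine.
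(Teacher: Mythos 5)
Your proposal is correct and follows essentially the same route as the paper: replace each vertex by $w(v)$ false twins, map independent sets both ways via saturation of twin classes, and bound the linear rank-width by $w+1$ by observing that a cut splitting a twin class adds at most one to the rank across the corresponding cut of $\sigma$. The paper phrases the rank bound via row/column duplication of $M_G(A,B)$ with $B$ possibly overlapping $A$ in one vertex, while you split off the block of rows for the partially-cut twin class and invoke subadditivity, but these are the same argument in different bookkeeping.
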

\begin{proof}
	We assume without loss of generality that $G$ has no vertex of weight 0 (we can always delete them without changing the weights of the independents sets of $G$). 
	The graph $G'$ is obtained from $G$ by adding iteratively $w(v)-1$ false twins to each vertex $v\in V(G)$. 
	Formally, $G'$ is the graph with vertex set $V(G')=\{v_i\mid v\in V(G) \land i\in [w(v)]\}$ and edge set $\{u_iv_j \mid uv\in E(G) \land i\in [w(u)] \land j\in [w(v)]\}$.
	We construct $\sigma'$ from $\sigma$ by replacing every vertex $v\in V(G)$ by the sequence $(v_1,\dots,v_{w(v)})$.
    Obviously, $G'$ and $\sigma'$ can be constructed in time $O(\abs{V(G)}  \max_{v\in V(G)} w(v))$
	
	Given an independent set $I$ of $G$, it is easy to see that $\{ v_i \mid v\in I \land i\in [w(v)]\}$ is an independent set of $G$' of size $w(I)$.
	On the other hand, for every independent set $I'$ of $G'$, we have $\abs{I'}\leq w(\{v \in V(G) \mid I\cap \{v_1,\dots,v_{w(v)}\}\neq \emptyset\})$.
	
	Let $(A',B')$ be a cut induced by $\sigma'$,  $A=\{v\in V(G)\mid v_1\in A'\}$ and $B=\{v\in V(G)\mid v_{w(v)}\in B'\}$.
	By construction, $(A,V(G)\setminus A)$ is a cut of $G$ induced by $\sigma$ and $B$ is either $V(G)\setminus A$ or $(V(G)\setminus A)\cup \{v\}$ for some vertex $v\in A$ if $v_1\in A'$ and $v_{w(v)}\in B'$.
	Moreover, the adjacency matrix between $A'$ and $B'$ in $G'$ can be obtained from the one between $A$ and $B$ in $G$ by adding copies of rows and columns.
	Since adding a copy of a row or a column does not increase the rank of a matrix, we conclude that $\rw(A',B')\leq \rw(A,B) + \rw(A,\{v\}) \leq \rw(A,V(G)\setminus A) + 1$.
	As $\rw(A,V(G))\leq w$, we deduce that $\rw(A',B')\leq w+1$.
	We conclude that the rank-width of $\sigma'$ is at most $w+1$.
\end{proof}

Theorem~\ref{thm:main} is now a direct consequence of \Cref{thm:IS,lem:IS:unweighted}.

\section{Maximum Induced Matching and Feedback Vertex Set}
\label{sec:MIMFVS}

In this subsection, we prove that our lower bound for \textsc{Independent Set} holds also for \textsc{Maximum Induced Matching} and \textsc{Feedback Vertex Set}. To prove this, we provide a single reduction from \textsc{Independent Set} that works for both problems. 

\begin{lemma}\label{lem:mim:fvs}
	Let $G$ be a graph, $\sigma$ be a linear decomposition of $G$ of rank-width $w$. We can construct in polynomial time a graph $G'$ and a linear decomposition of $G'$ of rank-width at most $w+1$ such that, for every $k\in \bN$, the following properties are equivalent:
	\begin{enumerate}
		\item\label{item:is} $G$ admits an independent set of size $k$.
		\item\label{item:mim} $G'$ admits an induced matching on $k$ edges.
		\item\label{item:fvs} $G$ admits an induced forest with $2k$ vertices.
	\end{enumerate} 
\end{lemma}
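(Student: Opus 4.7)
The plan is to construct $G'$ by replacing each vertex $v \in V(G)$ with a pair of \emph{true twins} $v_1, v_2$: add the edge $v_1 v_2$, and for every $uv \in E(G)$ add all four edges $u_i v_j$ with $i,j \in \{1,2\}$. The linear decomposition $\sigma'$ of $G'$ is obtained from $\sigma$ by substituting each $v$ by the consecutive pair $v_1, v_2$. To bound the rank-width, I will observe that every cut of $\sigma'$ refines a cut $(A,\overline{A})$ of $\sigma$ by possibly splitting at most one pair. Applying \Cref{lem:cutcomb} to peel off the single ``split'' copy contributes at most $1$ to the rank, and on the remaining side every non-split blob has two identical rows in the bipartite adjacency matrix (the twins share the same external neighborhood), so after deduplication the matrix has the same rank as the cut-rank of $(A,\overline{A})$ in $G$. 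This gives $\rw(A',\overline{A'}) \le w+1$.

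The forward implications (\ref{item:is}) $\Rightarrow$ (\ref{item:mim}) and (\ref{item:is}) $\Rightarrow$ (\ref{item:fvs}) are handled by the same witness. Given an independent set $I$ in $G$ of size $k$, the set $S := \{v_1, v_2 : v \in I\}$ induces in $G'$ exactly a matching of $k$ edges, namely the twin-edges $v_1v_2$ for $v \in I$: no edges appear between distinct blobs of $S$ because the vertices of $I$ are pairwise non-adjacent in $G$. This set is simultaneously an induced matching of size $k$ and an induced forest on $2k$ vertices.

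For (\ref{item:mim}) $\Rightarrow$ (\ref{item:is}), given an induced matching $M$ in $G'$, I will pick for each $e \in M$ a representative $v(e) \in V(G)$ such that some endpoint of $e$ lies in $\{v(e)_1, v(e)_2\}$. Because $v_1v_2 \in E(G')$, the two copies of a single vertex of $G$ cannot appear in two distinct edges of $M$ without the edge $v_1 v_2$ connecting them and violating the induced condition; hence the $v(e)$ are pairwise distinct. Similarly, if $v(e_1)v(e_2) \in E(G)$ for $e_1 \neq e_2$, then some $u_iv_j \in E(G')$ connects $e_1$ and $e_2$, again contradicting that $M$ is induced. Thus $\{v(e) : e \in M\}$ is an independent set of size $k$ in $G$.

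The main obstacle is the direction (\ref{item:fvs}) $\Rightarrow$ (\ref{item:is}). Given an induced forest $F$ with $|F| \ge 2k$ in $G'$, I will partition $V(G)$ into $X := \{v : \{v_1, v_2\} \subseteq F\}$ and $Y := \{v : |F \cap \{v_1, v_2\}| = 1\}$, so that $|F| = 2|X|+|Y| \ge 2k$. Three structural observations then close the argument: (a)~$X$ is independent in $G$, since any $uv \in E(G)$ with $u,v \in X$ would place a $K_4$ on $\{u_1,u_2,v_1,v_2\}$ inside $F$; (b)~there is no $G$-edge between $X$ and $Y$, since such an edge would create a triangle $u_1 u_2 v_j$ in $F$; and (c)~selecting in $F$ the unique copy for each $v \in Y$ yields a subgraph isomorphic to $G[Y]$, which is therefore a forest and admits an independent set $Z$ with $|Z| \ge |Y|/2$. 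By (a) and (b), $X \cup Z$ is an independent set of $G$, and $|X \cup Z| \ge |X| + |Y|/2 \ge k$, completing the proof.
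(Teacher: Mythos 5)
Your construction and the arguments for the implications $(\ref{item:is})\Rightarrow(\ref{item:mim})\land(\ref{item:fvs})$, $(\ref{item:mim})\Rightarrow(\ref{item:is})$, and the rank-width bound are essentially identical to the paper's (the paper denotes your $v_1,v_2$ by $v,\widehat v$ and compresses the rank-width argument to a one-line appeal to true twins, but the content is the same). Where you genuinely diverge is in $(\ref{item:fvs})\Rightarrow(\ref{item:is})$, and your version is correct and arguably cleaner. The paper takes a maximum induced forest with the largest possible number of twin-edges $v\widehat v$ and runs an exchange argument: any component containing a non-twin edge can be replaced by a union of twin-edges of at least the same size, which after iterating forces $E(F)\subseteq\{v\widehat v\}$, from which the independent set is read off; this implicitly also relies on maximality to pair up singletons. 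You instead work with an arbitrary induced forest $F$ on $\ge 2k$ vertices and partition $V(G)$ by whether zero, one, or two copies of $v$ land in $F$. The observations that two fully-covered adjacent vertices would force a $K_4$ inside $F$, that a fully-covered vertex adjacent to a half-covered one would force a triangle, and that the half-covered vertices induce a forest isomorphic to $G[Y]$ (hence $2$-colourable) give the desired independent set by a direct count $|X|+|Y|/2\ge(2|X|+|Y|)/2\ge k$. This avoids the extremality assumption and the somewhat elided pairing step in the paper's proof, at the cost of introducing the three-way partition; both are fine, but your argument reads as more self-contained.
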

\begin{proof}
	Let $G'$ be the graph with vertex set $\{v,\widehat{v} \mid v\in V(G)\}$ and edge set 
	\[ 	\{v\widehat{v}\mid v\in V(G)\}\cup \{uv, \widehat{u} v,u\widehat{v},\widehat{u} \widehat{v} \mid uv\in E(G) \}. \]
	We construct a linear decomposition $\sigma'$ of $G'$ from $\sigma$ by inserting $\widehat{v}$ after $v$ in $\sigma$ for each vertex $v\in V(G)$.
	Obviously, $G'$ and $\sigma'$ can be constructed in polynomial time.
	Since $\widehat{v}$ is a true twin of $v$ in $G'$ for every $v\in V(G)$, we deduce that the rank-width of $\sigma'$ is at most $w+1$.
	
	\subparagraph{($\ref{item:is}\Rightarrow\ref{item:mim}\land \ref{item:fvs}$)} Given an independent set $I$ of $G$, the set of vertices $\{v\widehat{v}\mid v\in I\}$ induces a matching (and a forest) of $G'$.
	
	\subparagraph*{($\ref{item:mim}\Rightarrow\ref{item:is}$)} For $M$ an induced matching $M$ of $G'$, we obtain an independent set of $G$ of size $\abs{M}$ by considering any set of vertices that contains exactly one endpoint in $V(G)$ of each edge in $M$.
	
	\subparagraph{($\ref{item:fvs}\Rightarrow\ref{item:is}$)} Let $F$ be a forest of $G'$ of maximum size with a maximum number of edges in $\widehat{E}\defeq\{v\widehat{v}\mid v\in V(G)\}$.
	Assume towards a contradiction that $F$ admits a connected component $C$ with an edge not in $\widehat{E}$.
	By construction, we deduce that $V(C)$ contain at most one vertices in $\{v,\widehat{v}\}$ for each $v\in V(G)$.
	As $C$ is a tree, there exists $I_C\subseteq C$ such that $2\abs{I_C}\geq \abs{C}$ and $G'[I_C]$ is an independent set.
	Observe that $M=\{v,\widehat{v}\mid v\in I_C \lor \widehat{v}\in I_C\}$ induces a matching of size at least $\abs{C}$.
	It follows that $(V(F)\setminus V(C))\cup M$ induces a forest with at least as much vertices as $F$ and with more edges in $\widehat{E}$, yielding a contradiction.
	Consequently, we have $E(F)\subseteq \widehat{E}$ and we conclude that $V(F)\cap V(G)$ is an independent set of size $\abs{V(F)/2}$.
\end{proof}

The following corollary is a direct consequence of \Cref{thm:main,lem:mim:fvs}.

\begin{corollary}\label{cor:mim}
	There is no algorithm solving \textsc{Maximum Induced Matching} or \textsc{Feedback Vertex Set} in time $2^{o(\rw(G)^{2})} n^{O(1)}$ unless ETH fails.
\end{corollary}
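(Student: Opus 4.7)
The plan is to derive the lower bounds for \textsc{Maximum Induced Matching} and \textsc{Feedback Vertex Set} as an immediate consequence of \cref{thm:main} via the reduction packaged by \cref{lem:mim:fvs}. Suppose, for contradiction, that one of these two problems admits an algorithm with running time $2^{o(\rw(H)^2)} \cdot |V(H)|^{O(1)}$ on input graphs $H$. I will compose this hypothetical algorithm with the reduction of \cref{lem:mim:fvs} to obtain a $2^{o(\lrw(G)^2)} \cdot |V(G)|^{O(1)}$ algorithm for \textsc{Independent Set}, contradicting \cref{thm:main}.

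Given an instance $G$ of \textsc{Independent Set}, I first apply the polynomial-time construction of \cref{lem:mim:fvs} to build the graph $G'$ (whose vertex set is a true-twin blow-up of $V(G)$ together with a perfect matching onto the clones). The three equivalent conditions in \cref{lem:mim:fvs} let me read off the maximum independent set size of $G$ from either the maximum induced matching size of $G'$ or the maximum induced forest size in $G'$ (the latter being $|V(G')|$ minus the minimum feedback vertex set size). Hence a single call to the hypothetical MIM or FVS algorithm on $G'$ decides \textsc{Independent Set} on $G$. For the parameter, \cref{lem:mim:fvs} also guarantees that from any linear decomposition $\sigma$ of $G$ of rank-width $w$ one obtains a linear decomposition $\sigma'$ of $G'$ of rank-width at most $w+1$; taking $\sigma$ to witness $\lrw(G)$ yields $\rw(G') \le \lrw(G') \le \lrw(G)+1$.

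Putting these together, the total running time on $G$ is $2^{o((\lrw(G)+1)^2)} \cdot |V(G)|^{O(1)} = 2^{o(\lrw(G)^2)} \cdot |V(G)|^{O(1)}$, contradicting \cref{thm:main}. No step is delicate: \cref{lem:mim:fvs} already packages the combinatorial reduction together with the linear-rank-width transfer, and the only arithmetic observation needed is that $(\lrw(G)+1)^2 = \lrw(G)^2 + O(\lrw(G))$ keeps the hypothetical subexponential saving visible in the exponent. The one point worth flagging, rather than a genuine obstacle, is that the MIM/FVS algorithm is assumed to parameterize its running time by $\rw$ of its input without being handed the decomposition explicitly, so the reduction never needs to actually compute a linear decomposition of $G$ or $G'$; the inequality $\rw(G') \le \lrw(G)+1$ is used purely to analyse the running time of the composite procedure.
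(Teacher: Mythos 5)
Your proof is correct and matches the paper's intended argument exactly: the paper simply states the corollary as a "direct consequence" of \cref{thm:main} and \cref{lem:mim:fvs}, and you have filled in precisely that composition, including the parameter bookkeeping $\rw(G') \le \lrw(G') \le \lrw(G)+1$ and the observation that $(\lrw(G)+1)^2$ still absorbs into $o(\lrw(G)^2)$.
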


\section{Weighted Dominating Set}
\label{sec:wds}

As for \textsc{Independent Set}, the starting point is an instance $\phi$ of \textsc{3-CNF-SAT} with $m$ clauses $C_1,\dots,C_m$ and a set of $k^2$ variables $\var(\phi)\defeq \{v_{i,j} \mid i\in [k], j\in [k+1,2k]\}$.
Similarly to \textsc{Independent Set}, we construct a graph $G$ with $m$ copies $A_1,\dots,A_m$ of $A^{2k}[\cS]$ and we make sure that for any dominating set of minimum weight $D$, the intersection of $D$ with $A_1,\dots,A_m$ corresponds to $A_1[\cS_f],\dots,A_m[\cS_f]$ for some interpretation $f$ of $\var(\phi)$.
The clause gadget associated with $C_i$ consists of a single vertex adjacent to the vertices of the $i$-th copy of $A^{2k}[\cS]$ that represent the partial interpretations of $\var(\phi)$ satisfying $C_i$.
Thus, if $\phi$ is satisfied by an interpretation $f$, a dominating set including $A_1[\cS_f],\dots,A_m[\cS_f]$ would dominate the vertices associated with the clause gadgets.

The main difference with our reduction for \textsc{Independent Set} is that we need $2(m-1)$ copies $B_1,\widehat{B}_1,\dots,B_{m-1},\widehat{B}_{m-1}$ of $B^{2k}$  to guarantee our equivalence between minimum dominating sets and interpretations of $\var(\phi)$.
Briefly, for each $i\in[m-1]$, $G[A_i,B_i]$ and $G[\widehat{B}_i,A_{i+1}]$ are isomorphic to $R_{2k}[A^{2k}[\cS],B^{2k}]$ and $G[B_i,\widehat{B}_i]$ is an induced matching such that for each set $s\subseteq [2k]$, the two vertices in $B_i\cup \widehat{B}_i$ associated with $s$ are adjacent. See \Cref{fig:dsoverview} for an overview of the reduction.

This path-shaped construction and prohibitive weights on some vertices ensure that $\phi$ is satisfiable by an interpretation $f$ iff the set containing $A_1[\cS_f],\dots,A_m[\cS_f]$ and $B_1\setminus N(A_1[\cS_f]),\dots,B_{m-1}\setminus N(A_m[\cS_f])$ is a dominating set of minimum weight.
Thanks to the induced matchings between $B_i$ and $\widehat{B}_i$, the vertices of $B_i\setminus N(A_i[\cS_f])$ dominate the vertices in $\widehat{B}_i$ that are not dominated by $A_{i+1}[\cS_f]$.

\begin{figure}
	\centering
	\includegraphics[width=0.9\linewidth]{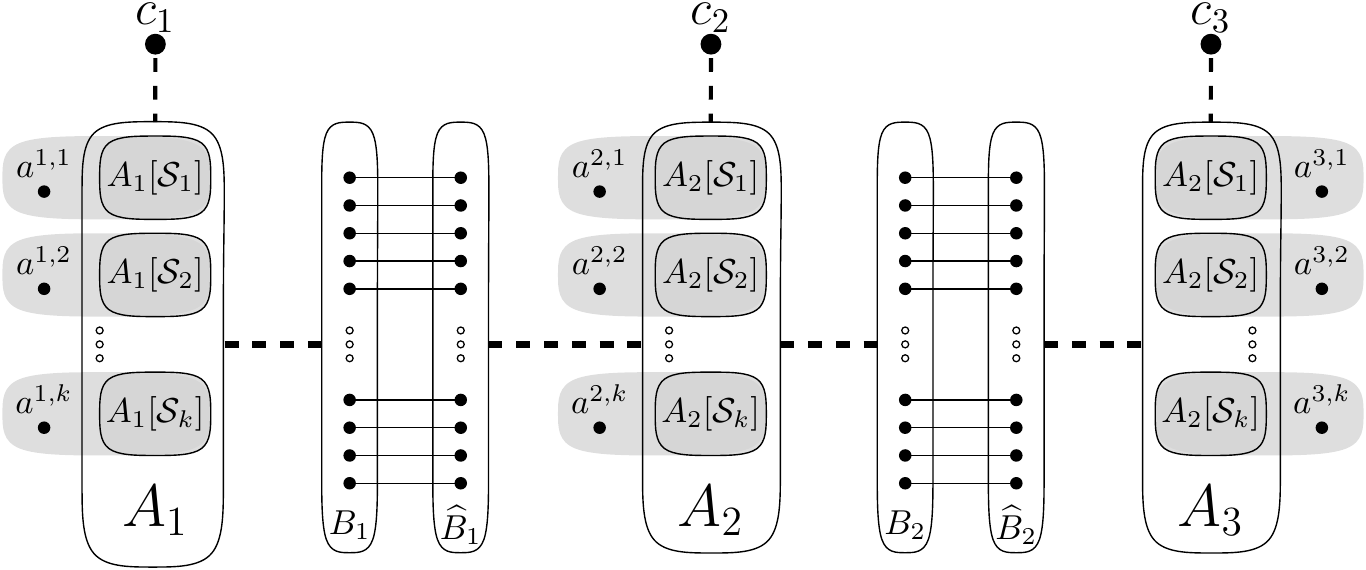}
	\caption{Overview of the reduction for \textsc{Weighted Dominating Set} with $m=3$. The gray areas represents cliques and dotted lines indicates the existence of edges between two sets of vertices.}
	\label{fig:dsoverview}
\end{figure}

\subparagraph{The construction.}  
We construct a graph $G$ as follows.
We create $m$ copies $A_1,\dots,A_m$ of $A^{2k}[\cS]$, for each $i\in [m]$, we have $A_i\defeq \{a^i_s \mid s \in \cS\}$.
For each $i\in[m]$ and $j\in [k]$, we create a vertex $a^{i,j}$ and we add edges so that $A_i[\cS_j] \cup \{a^{i,j}\}$ induces a clique denoted by $K_{i,j}$.
As the vertex $a^{i,j}$ will be only adjacent to the vertex in $A_i[\cS_j]$.
This guarantee that any dominating set of $G$ contains at least one vertex in $K_{i,j}$.


We create $2(m-1)$ copies $B_1,\widehat{B}_1,\dots,B_{m-1},\widehat{B}_{m-1}$ of $B^{2k}$, for each $i\in [m-1]$, we have $B_i\defeq \{b^i_s \mid s \subset [2k]\}$ and $\widehat{B}_i\defeq \{\widehat{b}^i_s \mid s \subseteq [2k]\}$.
For every $i\in [m-1]$, $s\in \cS$ and $t\subseteq [2k]$ such that $\abs{s\cap t}$ is odd, we make (1)~$a^i_s$ adjacent to $b^i_t$, (2)~$\widehat{b}^i_t$ adjacent to $a^{i+1}_s$ and (3)~$b_t^i$ adjacent to $\widehat{b}_t^i$.
 Consequently, $G[A_i,B_i]$ and $G[\widehat{B}_i,A_{i+1}]$ are both isomorphic to $R_{2k}[A^{2k}[\cS],B^{2k}]$ and $G[B_i,\widehat{B}_i]$ is an induced perfect matching.

For every $i\in[m]$ with $\widehat{B}_i=\{\ell_1,\ell_2,\ell_3\}$, we create a vertex $c_i$ adjacent to the vertices in $A_i[\cS_{\ell_1}\cup \cS_{\ell_2}\cup \cS_{\ell_3}]$.

Finally, we define the weight function $w : V(G) \to \bN$ such each vertex $v\in B_1\cup \dots \cup B_{m-1}\cup \{c_i \mid i\in [m]\}$  has weight $w(v)\defeq 1$, each vertex $u\in A_1\cup \dots \cup A_m$ has weight $w(u)\defeq 2^{2k}+2$ and every vertex $x\in \widehat{B}_1 \cup \dots \cup \widehat{B}_{m-1} \cup \{a^{i,j}\mid i\in [m]\land j\in [k]\}$ has weight $w(x)\defeq+\infty$.
The purpose of $w$ is to guarantee that every minimum dominating set of $G$ contains at most $1$ vertices in each $A_i[\cS_j]$ and no vertex in $\widehat{B}_1 \dots \widehat{B}_{m-1} \cup \{a^{i,j}\mid i\in [m]\land j\in [k]\}$.

\begin{lemma}\label{lem:DS:SATtoDS}
	If $\phi$ is satisfiable, then $G$ admits a dominating set of weight $(2^{2k}+2)km + 2^{k}(m-1)$.
\end{lemma}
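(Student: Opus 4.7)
The plan is to exhibit an explicit dominating set built from a satisfying assignment $f$ of $\phi$. Define
\[
D \;:=\; \bigcup_{i=1}^{m} A_i[\cS_f] \;\cup\; \bigcup_{i=1}^{m-1} \bigl( B_i \setminus N(A_i[\cS_f]) \bigr).
\]
I would first verify its weight. By \Cref{lemma:IS:interpretation}, $|\cS_f|=k$, so each $|A_i[\cS_f]|=k$, contributing $km(2^{2k}+2)$ since every $A$-vertex has weight $2^{2k}+2$. Applying \Cref{lem:neighborhood:size} with $k$ replaced by $2k$ yields $|N_{R_{2k}}(A^{2k}[\cS_f])| = 2^{2k}-2^{k}$, so $|B_i \setminus N(A_i[\cS_f])| = 2^k$, contributing $2^k(m-1)$ at weight $1$ per vertex. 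Summing gives exactly $(2^{2k}+2)km + 2^k(m-1)$, as required.

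Next I would verify that $D$ dominates $V(G)$ case by case on vertex type. For each clique $K_{i,j} = A_i[\cS_j] \cup \{a^{i,j}\}$, the set $\cS_f$ contains exactly one element of $\cS_j$, so $A_i[\cS_f] \cap K_{i,j}$ is a single vertex that dominates the entire clique; this handles all of $\bigcup_i A_i$ together with every vertex $a^{i,j}$. A vertex of $B_i$ is either in $N(A_i[\cS_f])$ and hence dominated by $D \cap A_i$, or it lies in $B_i \setminus N(A_i[\cS_f]) \subseteq D$. For each clause vertex $c_i$, since $f$ satisfies $C_i$, \Cref{obs:interpretation:literal} produces a literal $\ell \in C_i$ with $\cS_f \cap \cS_\ell \neq \emptyset$; any $s$ in this intersection gives $a^i_s \in A_i[\cS_f] \cap D$, and $a^i_s$ is adjacent to $c_i$ by construction of the clause gadget.

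The main obstacle, and the reason the construction chains $B_i$ and $\widehat{B}_i$ through a perfect matching, is dominating $\widehat{B}_i$. The key observation is that $G[A_i, B_i]$ and $G[\widehat{B}_i, A_{i+1}]$ are both isomorphic to $R_{2k}[A^{2k}[\cS], B^{2k}]$ under the natural indexing by subsets of $[2k]$, so $\widehat{b}^i_t \in N(A_{i+1}[\cS_f])$ if and only if $b^i_t \in N(A_i[\cS_f])$. Hence, whenever $\widehat{b}^i_t$ fails to be dominated by $A_{i+1}[\cS_f]$, its matched partner $b^i_t$ lies in $B_i \setminus N(A_i[\cS_f]) \subseteq D$, and the matching edge $b^i_t \widehat{b}^i_t$ dominates $\widehat{b}^i_t$. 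With all vertex classes accounted for, $D$ is a dominating set of $G$ of the claimed weight.
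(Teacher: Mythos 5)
Your proof is correct and takes essentially the same approach as the paper: the same dominating set $D$, the same weight computation via \Cref{lem:neighborhood:size} (relying on $\cS_f \in \sF_{2k}$), and the same case analysis for domination; your argument for $\widehat{B}_i$ is just the contrapositive phrasing of the paper's (``if $b^i_s\notin D$ then $a^{i+1}_t$ covers $\widehat{b}^i_s$'' versus ``if $A_{i+1}[\cS_f]$ misses $\widehat{b}^i_t$ then $b^i_t\in D$''), both resting on the isomorphism between $G[A_i,B_i]$ and $G[\widehat{B}_i,A_{i+1}]$.
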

\begin{proof}
	Suppose that $\phi$ is satisfied by an interpretation $f$.
	Let $D$ be the union of $A_i[\cS_f]$ and $B_j\setminus N( A_{j}[\cS_f])$ for every $i\in [m]$ and $j\in [m-1]$.
	We claim that $D$ is a dominating set of weight $(2^{2k}+2)km + 2^{k}(m-1)$.
	The weight of $D$ is deductible from the following observations:
	\begin{itemize}
		\item For each $i\in [m]$, the weight of each vertex in $A_i$ is $2^{2k}+2$ and $\abs{A_i[\cS_f]}=k$.
		\item The weight of each vertex in $B_1\cup \dots \cup B_{m-1}$ is 1, and by \Cref{obs:sF_k}, we have $\cS_f\in\sF_{2k}$ which implies with \Cref{lem:neighborhood:size} that $\abs{B_j\setminus N( A_{j}[\cS_f])}=2^k$ for each $j\in [m-1]$.
	\end{itemize}
	We conclude that $D$ is a dominating set of $G$ from the following arguments:
	\begin{itemize}
		\item Let $i\in [m]$.
		By definition, $A_i[\cS_f]$ contains one vertex in $A_i[\cS_j]$ for every $j\in[k]$.
		As $K_{i,j}=A_i[\cS_j]\cup a^{i,j}$ is a clique, we deduce that $D$ dominates $A_i$ and $\{a^{i,j} \mid j\in [k]\}$.
		Moreover, $f$ interprets at least one literal $\ell$ of $C_i$ as true.
		Thus, $A_i[\cS_f]\cap A_i[\cS_\ell]\neq \emptyset$ and $c_i$---the vertex representing $C_i$--- has a neighbor in $D$.
		So, $D$ dominates $A_i\cup \{c_i\}\cup \{a^{i,j}\mid j\in [k]\}$ for every $i\in[m]$.
		
		\item Let $j\in [m-1]$.
		As $A_j[\cS_f]$ and $B_j\setminus N( A_{j}[\cS_f])$ are included in $D$, we know that $D$ dominates $B_j$.
		Let $\widehat{b}_s^j$ be a vertex in $\widehat{B}_j$.
		If $b_s^j\in B_j$ is in $D$, then $\widehat{b}_s^j$ is dominated by $D$ as $b_s^j \widehat{b}_s^j$ is an edge of $G$.
		Otherwise, if $b_s^j$ is not in $D$, then $b_s^j$ is adjacent to a vertex $a_t^j$ in $A_j[\cS_f]\subseteq D$.
		In this later case, the vertex $a_t^{j+1}$ belongs to $A_{j+1}[\cS_f] \subseteq D$ and $a_t^{j+1}$ is adjacent to $\widehat{b}_s^j$.
		It follows that $D$ dominates $\widehat{B}_i$.
	\end{itemize}
\end{proof}

\begin{lemma}\label{lem:DS:DStoSAT}
	If $G$ admits a dominating set of weight at most $(2^{2k}+2)km + 2^k(m-1)$, then $\var(\phi)$ is satisfiable.
\end{lemma}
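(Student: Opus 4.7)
The plan is to show that if $D$ is a dominating set of $G$ with $w(D) \leq (2^{2k}+2)km + 2^k(m-1)$, then $D$ has a rigid structure encoding a satisfying interpretation of $\phi$. First, the infinite weights force $D \cap \widehat{B}_j = \emptyset$ for every $j \in [m-1]$ and $a^{i,j} \notin D$ for every $i \in [m]$, $j \in [k]$. Moreover, the only finite-weight neighbors of $a^{i,j}$ lie in $A_i[\cS_j]$, so $D \cap A_i[\cS_j] \neq \emptyset$. Setting $\alpha_i \defeq \abs{D \cap A_i}$ and $\cS^D_i \defeq \{s \in \cS : a^i_s \in D\}$, we have $\alpha_i \geq k$ for every $i$ and $\cS^D_i$ meets every $\cS_j$ with $j \in [k]$.

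Next I lower bound $\abs{D \cap B_j}$ for each $j \in [m-1]$. Each $\widehat{b}^j_s$ is outside $D$ and must be dominated, so either $b^j_s \in D$ or some $a^{j+1}_t \in D$ has $\abs{s \cap t}$ odd. The second possibility fails exactly when $s \in \inprod{\cS^D_{j+1}}^\perp$ in $\bZ_2^{2k}$ (as in the proof of \Cref{lem:neighborhood:size}, where the identity $N_{R_k}(A^k[\cS]) = B^k \setminus B^k[\inprod{\cS}^\perp]$ is derived without any constraint on $\cS$), forcing $b^j_s \in D$ for every such $s$. Since $\dim \inprod{\cS^D_{j+1}} \leq \abs{\cS^D_{j+1}} = \alpha_{j+1}$, this yields $\abs{D \cap B_j} \geq 2^{2k - \alpha_{j+1}}$.

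Combining, $w(D) \geq (2^{2k}+2)\sum_{i=1}^m \alpha_i + \sum_{j=1}^{m-1} 2^{2k-\alpha_{j+1}}$. At the minimal configuration $\alpha_i = k$ the right-hand side equals exactly $(2^{2k}+2)km + 2^k(m-1)$. Incrementing any single $\alpha_{i_0}$ by one raises the right-hand side by at least $(2^{2k}+2) - 2^{k-1} > 0$ (and by the full $(2^{2k}+2)$ if $i_0 = 1$, where no $B$-saving is available), so the only way to respect the budget is $\alpha_i = k$ for every $i$, with all inequalities above tight: $\abs{D \cap B_j} = 2^k$ and $D$ contains no $c_i$. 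Since $\cS^D_i$ then picks one element per $\cS_j$, the set $A_i[\cS^D_i]$ is an independent set of size $k$ in $G[A_i]$, and \Cref{lemma:IS:interpretation} provides interpretations $f_1, \ldots, f_m$ of $\var(\phi)$ with $\cS^D_i = \cS_{f_i}$.

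Tightness in paragraph two also gives $B_j \setminus D = N_{R_{2k}}(A^{2k}[\cS_{f_{j+1}}])$ under the $R_{2k}$-identification. Running the dual argument on the domination of each $b^j_s \in B_j$ (whose only possible dominator outside $D \cap B_j$ is some $a^j_t \in D$, since $\widehat{b}^j_s \notin D$) yields $B_j \setminus D = N_{R_{2k}}(A^{2k}[\cS_{f_j}])$ as well, so $N_{R_{2k}}(A^{2k}[\cS_{f_j}]) = N_{R_{2k}}(A^{2k}[\cS_{f_{j+1}}])$ for every $j \in [m-1]$, and \Cref{claim:IS:neighborhood:different} forces $f \defeq f_1 = \cdots = f_m$. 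Finally, each $c_i \notin D$ must be dominated, and $N(c_i) = A_i[\cS_{\ell_1} \cup \cS_{\ell_2} \cup \cS_{\ell_3}]$ where $C_i = \{\ell_1,\ell_2,\ell_3\}$, so $\cS_f \cap \cS_{\ell_p} \neq \emptyset$ for some literal $\ell_p$ of $C_i$; by \Cref{obs:interpretation:literal} we conclude $f(\ell_p) = 1$, hence $f$ satisfies $C_i$. The main technical point is the trade-off in paragraph three: one must verify that no compensation between an $\alpha_i > k$ and a smaller $\abs{D \cap B_{i-1}}$ can fit under the budget, which ultimately boils down to the coarse inequality $(2^{2k}+2) > 2^{k-1}$.
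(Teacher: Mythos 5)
Your proof is correct, and it diverges from the paper's in one genuinely interesting step. The paper pins down $|D \cap A_i[\cS_j]| = 1$ via a local exchange argument: it assumes without loss of generality that $D$ has minimum weight, then shows that if $D$ contained two vertices of the same $A_i[\cS_j]$, one could be swapped out for a cheaper set $B_{i-1} \cup B_i \cup \{c_i\}$ (weight $\leq 2^{2k}+1 < 2^{2k}+2$) while preserving domination. You instead avoid any minimality assumption and run a purely global budget-accounting argument: the dimension bound $\lvert D \cap B_j \rvert \geq \lvert \inprod{\cS^D_{j+1}}^\perp \rvert \geq 2^{2k-\alpha_{j+1}}$ (valid because the identity $N_{R_{2k}}(A^{2k}[\cS]) = B^{2k} \setminus B^{2k}[\inprod{\cS}^\perp]$ needs no membership in $\sF_{2k}$) yields a lower bound on $w(D)$ that is convex and strictly increasing in each $\alpha_i$ above $k$, forcing $\alpha_i = k$ for all $i$. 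Combined with $D \cap A_i[\cS_j] \neq \emptyset$ for every $j$, this gives exactly one vertex per $\cS_j$. What your approach buys is a cleaner hypothesis — you never need $D$ to be minimum, only to meet the budget — at the cost of the arithmetic trade-off estimate $(2^{2k}+2) - 2^{k-1} > 0$ (which, as you note, must be checked across all increments simultaneously, not just single ones; the convexity of $2^{2k-\alpha}$ makes this go through). From that point on, the two arguments converge: both use the forced equality $B_j \setminus D = N_{R_{2k}}(A^{2k}[\cS_{f_j}]) = N_{R_{2k}}(A^{2k}[\cS_{f_{j+1}}])$ (the paper routes this through the perfect matching $B_i$–$\widehat{B}_i$ and the sets $B_i^\star$, $\widehat{B}_i^\star$; you derive it from the two dual domination constraints on $b^j_s$ and $\widehat{b}^j_s$, which is essentially the same observation), and then invoke the distinct-neighborhoods property to get $f_1 = \cdots = f_m$. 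One small citation note: the correct reference for the final neighborhood-uniqueness step in the $R_{2k}$ setting is \Cref{lem:different:neighborhood} rather than \Cref{claim:IS:neighborhood:different}, which is stated for the \textsc{Independent Set} construction; the underlying fact is the same.
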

\begin{proof}
	Let $D$ be a dominating set of weight at most $(2^{2k}+2)km + 2^k(m-1)$.
	We use the following claim prove that there exists some interpretation $f$ such that $D$ is the union of $A_1[\cS_f], \dots, A_m[\cS_f]$ and $B_1\setminus N(A_1[\cS_f]), \dots, B_m\setminus N(A_m[\cS_f])$.
	
	\begin{claim}\label{claim:ds:interpretation}
		For every $i\in [m]$, there exists an interpretation $f_i$ of $\var(\phi)$ such that $A_i\cap D = A_i[\cS_f]$.
	\end{claim}
	\begin{proof}
		By \Cref{def:interpretation:sF2k}, it is sufficient to prove that $D$ contains exactly one vertex in $A_i[\cS_j]$ for every $i\in[m]$ and $j\in [k]$.
		Let $i\in [m]$ and $j\in [k]$.
		By construction, we have $N(a^{i,j})=A_i[\cS_j]$.
		Since $a^{i,j}$ must be dominated by $D$ and $w(a^{i,j})=+\infty$, $D$ contain at least one vertex in $A_i[\cS_j]$.
		
		Assume towards a contradiction that $D$ contains two different vertices $u,v$ in $A_i[\cS_j]$.
		By construction, we have $N(u)\setminus N(v) \subseteq  \widehat{B}_{i-1}\cup B_i \cup \{c_i\}$ (we consider that $B_0=\widehat{B}_0=B_m=\emptyset$) and the vertices in $\{c_i\}\cup \widehat{B}_{i-1}\cup B_i$ are dominated by the set $X\defeq B_{i-1}\cup B_i \cup \{c_i\}$.
		Thus, $(D\setminus \{u\})\cup X$ is a dominating set of $G$.
		But, as $w(X)=\abs{X} \leq 2^{2k}+1 < w(u) = 2^{2k}+2$, this contradicts $D$ being a dominating set of minimum weight.
		We conclude that $D\cap A_i = A_i[\cS_{f_i}]$ for some interpretation $f_i$ of $\var(\phi)$.
	\end{proof}
	
	For every $i\in[m-1]$, we denote by $B_i^\star$ the vertices in $B_i$ that are not dominated by $A_i[\cS_{f_i}]$, i.e. $B_i^\star\defeq B_i \setminus N(A_i[\cS_{f_i}])$. Similarly, we denote by $\widehat{B}_i^\star$ the vertices in $\widehat{B}_i$ not dominated by $A_{i+1}[\cS_{f_{i+1}}]$.
	By \Cref{obs:sF_k}, we have $\cS_{f_1},\dots,\cS_{f_m}\in \sF_{2k}$ and from \Cref{lem:neighborhood:size}, we deduce that $\abs{B_i^\star} = \abs{\widehat{B}_i^\star} = 2^{k}.$
	By construction, each $B_i^\star$ is an independent set and $N(B_i^\star)$ is included in $A_i \cup \widehat{B}_i$.
	As $A_i\cap D = A_i[\cS_{f_i}]$ and $w(v)=+\infty$ for every $v\in \widehat{B}_i$, we deduce that $D$ contains $B_i^\star$ for every $i\in [m-1]$.
	
	Observe that $w(A_i[\cS_{f_i}])=(2^{2k}+2)k$ for every $i\in [m]$ and $w(B_j^\star)=\abs{B_j^\star}= 2^{k}$ for each $j\in[m-1]$.
	As $w(D)$ is at most $(2^{2k}+2)km + 2^{k}(m-1)$, we deduce that $D$ is exactly the union of $A_i[\cS_i]$ and $B_j^\star$ for $i\in[m]$ and $j\in [m-1]$.
	
	By definition, for each $i\in [m-1]$, the vertices in $\widehat{B}_i^\star$ are not dominated by $A_{i+1}[\cS_{f_{i+1}}]$ and thus they must be dominated by $B_i^\star$.
	As $G[B_i,\widehat{B}_i]$ is an induced perfect matching with set of edges $\{b_s^i\widehat{b}_s^i\mid s\subseteq [2k]\}$, we deduce that $\widehat{B}_i^\star=\{\widehat{b}_s^i \mid b_s^i \in B_i^\star\}$.
	This implies that $N(A_{i}[\cS_{f_i}])= N(A_{i}[\cS_{f_{i+1}}])$.
	Since $f_i$ and $f_{i+1}$ belong to $\sF_{2k}$, by \Cref{lem:different:neighborhood}, it follows that $f_i=f_{i+1}$.
	We deduce that $f_1=f_2=\dots=f_m$.
	We conclude that the interpretation $f_1=\dots=f_m$ satisfies $\phi$ because for every $i\in [m]$, the vertex $c_i$ representing the clause $C_i$ is dominated by $A_i[\cS_{f_i}]$ and by \Cref{obs:interpretation:literal} it implies that $f_i$ satisfies $C_i$.
\end{proof}

\begin{lemma}\label{lem:DS:rank}
	We can compute in polynomial time a linear decomposition of $G$ with rank-width at most $4k+2$.
\end{lemma}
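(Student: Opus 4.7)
The approach is to construct a linear ordering $\sigma$ of $V(G)$ that traces the path-like structure of $G$ and to bound the cut-rank at every prefix. Concretely, I plan to define $\sigma$ as the concatenation
\[
\sigma(A_1)\,\sigma(B_1\widehat{B}_1)\,\sigma(A_2)\,\sigma(B_2\widehat{B}_2)\,\cdots\,\sigma(B_{m-1}\widehat{B}_{m-1})\,\sigma(A_m),
\]
where $\sigma(A_i)$ lists the $k$ cliques $K_{i,1},K_{i,2},\ldots,K_{i,k}$ consecutively (each in an arbitrary internal order, with the vertex $a^{i,j}$ placed inside $K_{i,j}$) followed by $c_i$ at the end, and $\sigma(B_i\widehat{B}_i)$ interleaves the two sides of the perfect matching $\{b^i_s\widehat{b}^i_s : s\subseteq [2k]\}$ by listing each matched pair consecutively. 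This ordering is clearly computable in polynomial time from $G$.

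To bound $\rw(X,\comp{X})\le 4k+2$ for every prefix $X$ of $\sigma$, I will apply \cref{lem:cutcomb} together with two structural facts: (a)~$G$ is ``path-like'', so $A_i$'s external neighbors lie only in $\widehat{B}_{i-1}\cup B_i$, and $B_i\cup\widehat{B}_i$'s external neighbors lie only in $A_i\cup A_{i+1}$; and (b)~both $G[A_i,B_i]$ and $G[\widehat{B}_i,A_{i+1}]$ are induced subgraphs of the universal $2k$-rank cut $R_{2k}$, hence have cut-rank at most $2k$. \textbf{Case 1} (cut strictly inside an $A_i$ block): write $X = X_{<i}\cup A_i^L$ with $A_i^L=X\cap A_i$. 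Applying \cref{lem:cutcomb}, the contribution from $X_{<i}$ reduces to edges between $\widehat{B}_{i-1}$ and $A_i\setminus A_i^L$, of rank at most $2k$. The contribution from $A_i^L$ decomposes further (again by \cref{lem:cutcomb}) into edges from $A_i^L$ to $B_i$ (rank $\le 2k$), edges across the at-most-one split clique $K_{i,j}$ (rank $\le 1$), and edges from $A_i^L$ to $c_i$ (rank $\le 1$), totaling $4k+2$. \textbf{Case 2} (cut inside a $B_i\widehat{B}_i$ block): since $A_i\subseteq X$ and $A_{i+1}\subseteq\comp{X}$, the cross-cut edges decompose into edges from $A_i$ to the portion of $B_i$ in $\comp{X}$ (rank $\le 2k$), the at-most-one matching edge straddling the cut (rank $\le 1$), and edges from $\widehat{B}_i\cap X$ to $A_{i+1}$ (rank $\le 2k$), summing to $4k+1$. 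Cuts at block boundaries are subsumed by these cases.

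The main subtlety is the treatment of the $B_i\widehat{B}_i$ block: the perfect matching between $B_i$ and $\widehat{B}_i$ has full rank $2^{2k}$, so a naive ordering placing all of $B_i$ before all of $\widehat{B}_i$ would induce catastrophic cuts. Interleaving the matched pairs ensures that at most one matching edge crosses any prefix cut. The remaining bookkeeping---applying \cref{lem:cutcomb} to sum the $R_{2k}$-based, clique-based, and clause-vertex contributions---is a routine but careful verification.
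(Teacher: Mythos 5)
Your proposal is correct and takes essentially the same approach as the paper's proof: the same path-shaped ordering with $B_i$ and $\widehat{B}_i$ interleaved pair by pair, the same two-case analysis, and the same $2k+2k+1+1$ accounting via the $R_{2k}$ cuts, the split clique, the single matching edge, and the clause vertex. The only (cosmetic) difference is that you place $c_i$ at the end of $\sigma(A_i)$ while the paper places it at the start; either way it contributes rank at most~1.
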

\begin{proof}
	For every $X\in \{K_{i,j} \mid i\in [m]\land j\in [k]\}$, let $\sigma(X)$ be an arbitrary permutation of $X$.
	For every $i\in [m]$, we define the permutation $\sigma(A_i)$ of $A_i\cup \{c_i\} \cup \{a^{i,j} \mid j\in [k]\}$ as the concatenation of the permutation $(c_i),\sigma(K_{i,1}),\sigma(K_{i,2}),\dots,\sigma(K_{i,k-1})$ and $\sigma(K_{i,k})$.
	Let $(s_1,\dots,s_t)$ be a permutation of $2^{[2k]}$, for every $i\in [m-1]$, we define $\sigma(B_i\cup \widehat{B}_i)$ as the permutation $(b^i_{s_1},\widehat{b}^i_{s_1},b^i_{s_2},\dots,\widehat{b}^{i}_{s_{t-1}},b^{i}_{s_t},\widehat{b}^i_{s_t})$.
	Let $\sigma$ be the concatenation of $\sigma(A_1),\sigma(B_1\cup \widehat{B}_1),\sigma(A_2),\dots,\sigma(B_{m-1}\cup \widehat{B}_{m-1})$ and $\sigma(A_m)$.
	
	Obviously, $\sigma$ is a linear decomposition of $G$ that can be computed in polynomial time.
	We claim that the rank-width of $\sigma$ is at most $4k+2$.
	Let $(X,\comp{X})$ be a cut of $G$ induced by $\sigma$.
	We distinguish the following cases:
	\begin{itemize}
		\item Suppose that there exists $i\in[m]$ such that $X$ intersect $A_i\cup \{c_i\} \cup \{a^{i,j}\mid j\in [k]\}$ but not $B_i$ (we consider that $B_m=\emptyset$).
		The edges of $G[X,\comp{X}]$ belong to the following cuts: (1)~the cut between $\widehat{B}_{i-1}$ and $A_i \cap \comp{X}$, (2)~the cut between $A_i\cap X$ and $B_i$, (3)~the cut between $c_i$ and $A_i \cap \comp{X}$ and (4)~the cut between $K_{i,j}\cap X$ and $K_{i,j}\cap \comp{X}$ with $j=\max\{ \ell \in [k] \mid K_{i,\ell}\cap X\neq \emptyset\}$.
		
		The ranks of the first two cuts are upper bounded by $\rw(A_i,\widehat{B}_{i-1})$ and $\rw(A_i,B_i)$ respectively, since $G[A_i,\widehat{B}_{i-1}]$ and $G[A_i,B_i]$ are isomorphic to $R_{2k}$, these ranks are at most $2k$.
		The rank of the third cut is upper bounded by $1$ since one side consists of a single vertex. 
		Since the fourth cut is a biclique, its rank is at most 1.
		We deduce that $\rw(X,\comp{X})\leq 4k + 2$.
		
		\item Suppose now that there exists $i\in[m-1]$ such that $X$ intersect $B_i\cup \widehat{B}_i$ but not $A_{i+1}$.
		Let $b^i_{s}\in B_i$ be the rightmost vertex in $\sigma$ that belongs to $X$.
		The edges of $G[X,\comp{X}]$ belong to the following cuts: (1)~the cut between $A_i$ and $B_i\cap \comp{X}$, (2)~the cut between $\widehat{B}_i\cap X$ and $A_{i+1}$ and (3)~the cut between $b^i_s$ and $\{\widehat{b}_s^i\}\cap \comp{X}$.
		As argued for the previous case, the ranks of the first two cuts are upper bounded by $2k$ and the rank of the third is upper bounded by one. We conclude that $\rw(X,\comp{X})\leq 4k+1$.
	\end{itemize}
\end{proof}

\begin{theorem}\label{thm:DS}
	There is no algorithm solving \textsc{Weighted Dominating Set} in time $2^{o(\lrw(G)^{2})} n^{O(1)}$ unless ETH fails.
\end{theorem}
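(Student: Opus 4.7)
The plan is to obtain \cref{thm:DS} as a direct assembly of the three lemmas preceding it, in parallel with how \cref{thm:IS} was derived from \cref{lem:IS:SAT,lem:SAT:IS,lem:IS:rw}. The main combinatorial content has already been packaged: \cref{lem:DS:SATtoDS,lem:DS:DStoSAT} establish the equivalence between satisfying assignments of $\phi$ and dominating sets of $G$ of weight at most $(2^{2k}+2)km + 2^{k}(m-1)$, and \cref{lem:DS:rank} gives a linear decomposition of $G$ of rank-width at most $4k+2$ computable in polynomial time.

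Concretely, I would assume toward a contradiction that \textsc{Weighted Dominating Set} admits an algorithm $\mathcal{A}$ running in time $2^{o(\lrw(G)^{2})} n^{O(1)}$, and show that this implies a $2^{o(k^{2})} (k+m)^{O(1)}$ time algorithm for \textsc{$3$-CNF-SAT} with $k^{2}$ variables and $m$ clauses, contradicting \cref{lem:3SAT:square:ETH}. Given such an instance $\phi$, I would build the graph $G$ and the weight function $w$ described above; the total number of vertices is $O(2^{2k}m + km)$, so $G$ can be constructed in time $2^{O(k)}m$, and a linear decomposition of $G$ of rank-width at most $4k+2$ can be obtained in polynomial time via \cref{lem:DS:rank}. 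Running $\mathcal{A}$ on this input decides whether $G$ admits a dominating set of weight at most $(2^{2k}+2)km + 2^{k}(m-1)$, which by \cref{lem:DS:SATtoDS,lem:DS:DStoSAT} is equivalent to satisfiability of $\phi$. The resulting running time is
\[
2^{o((4k+2)^{2})}\, (2^{O(k)} m)^{O(1)} \;=\; 2^{o(k^{2})}\, m^{O(1)},
\]
the desired contradiction with ETH.

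Since this proof is a short bookkeeping argument, there is no genuine obstacle; the only point to verify is that the instance remains of size polynomial in the construction. The only potentially large weight is the placeholder $+\infty$ used in the construction of $w$ on $\widehat{B}_1\cup\dots\cup \widehat{B}_{m-1}\cup \{a^{i,j}\mid i\in[m],\, j\in[k]\}$; for the algorithmic argument one replaces it by any integer strictly larger than the sum of all other vertex weights (for instance $(2^{2k}+3)km+m\cdot 2^{k}$), which is still polynomial in the number of vertices of $G$ and therefore preserves the $n^{O(1)}$ overhead of $\mathcal{A}$. All arguments in \cref{lem:DS:SATtoDS,lem:DS:DStoSAT} use $+\infty$ only through the statement that a minimum-weight dominating set avoids these vertices, a property that is equally guaranteed by this finite but prohibitive weight.
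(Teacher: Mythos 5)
Your proposal matches the paper's proof of \cref{thm:DS} essentially verbatim: assume a fast algorithm, build $G$ from a 3-SAT instance with $k^2$ variables, invoke \cref{lem:DS:SATtoDS,lem:DS:DStoSAT,lem:DS:rank}, and contradict \cref{lem:3SAT:square:ETH}. Your remark about replacing the informal $+\infty$ weight by a finite prohibitive value is a minor detail the paper glosses over, but it is a correct and worthwhile clarification that does not change the structure of the argument.
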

\begin{proof}
	Assume that there exists a $2^{o(\lrw(G)^{2})} n^{O(1)}$ time algorithm for \textsc{Weighted Dominating Set}.
	We prove that it implies the existence of a $2^{o(k^2)} n^{O(1)}$ time algorithm for \textsc{3-SAT} where $k^2$ is the number of variables. This will contradict ETH.
	
	Suppose that we are given a 3-SAT formula $\phi$ with $k^2$ variables and $m$ clauses.
	We construct the graph $G$ described above.
	As $G$ has $(2^{k}k+k+1)m + 2^{2k}(m-1)$ vertices, we deduce that we can construct $G$ in time $2^{O(k)} m$.
	
	From \Cref{lem:DS:SATtoDS,lem:DS:DStoSAT}, we know that $G$ admits a dominating set of weight at most $(2^{2k}+2)km + 2^k(m-1)$ iff $\phi$ is satisfiable.
	Thanks to \Cref{lem:DS:rank}, we can compute in polynomial time a linear decomposition of $G$ of rank-width at most $4k+2$.
	By assumption, we can compute a dominating set of minimum weight of $G$ in time $2^{o(k^{2})} n^{O(1)}$.
	Hence, we can decide whether $\phi$ is satisfiable in time $2^{o(k^2)} n^{O(1)}$.
	This contradicts ETH by \Cref{lem:3SAT:square:ETH}.
\end{proof}

\section{Relation of Boolean-width and Rank-width}
\label{sec:rel}
In this section, we show that for every integer $k \ge 1$ there exists a graph with rank-width at most $2k+1$ and Boolean-width at least $k(k-3)/6 = \Omega(k^2)$.
This answers negatively the question asked by Bui-Xuan, Telle, and Vatshelle in 2011 on whether Boolean-width is subquadratic in rank-width on all graphs~\cite{DBLP:journals/tcs/Bui-XuanTV11}.

Let $k$ be a positive integer.
We construct a graph $G$ in a similar, but slightly different way than in Section~\ref{sec:IS}.
We again start from the universal $2k$-rank cut $R_{2k}$, and remove all vertices in the side $A^{2k}$ that are not in the set $A^{2k}[\cS] = A^{2k}[\{s \subseteq [2k] \mid |s \cap [k]| = 1\}]$, and then for each $i \in [k]$ make the sets $A^{2k}[\cS_i] = A^{2k}[\{s \in \cS \mid s \cap [k] = \{i\}\}]$ into cliques.
Then, we make $k^2$ copies $B_1,\ldots,B_{k^2}$ of the other side $B^{2k}$ of the cut, and make each of them a clique.
Now, for each $i \in [k^2]$, the bipartite graph $G[A^{2k}[\cS], B_i]$ is isomorphic to $R_{2k}[A^{2k}[\cS], B^{2k}]$.
We arrive to a construction resembling Figure~\ref{fig:isoverview}, but this time $A^{2k}[\cS]$ is the ``center'' and $B_i$:s are the ``leaves''.

The following lemma encapsulates a standard argument on branch decompositions that we need to lower bound the Boolean-width.

\begin{lemma}
\label{lem:balsep}
Let $f$ be a function $f : 2^V \rightarrow \mathbb{Z}_{\ge 0}$ and $X \subseteq V$ with $|X| \ge 2$.
Any branch decomposition of $f$ has an edge that displays a bipartition $(L,R)$ of $V$ so that $|L \cap X| \ge |X|/3$ and $|R \cap X| \ge |X|/3$.
\end{lemma}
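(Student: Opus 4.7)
}
The plan is to run the standard ``heavy-side orientation'' argument on the branch decomposition tree $T$. For each edge $e$ of $T$, let $(A_e,B_e)$ be the bipartition of $V$ induced by $e$, and orient $e$ from the side with fewer $X$-elements toward the side with more (breaking ties arbitrarily). I will argue by contradiction that some edge $e$ satisfies $\min(|A_e\cap X|,|B_e\cap X|) \ge |X|/3$, so assume towards contradiction that for every edge $e$ the side $e$ points to contains strictly more than $2|X|/3$ elements of $X$.

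Next I would locate a ``sink'' node $v$ of this orientation, i.e., a node all of whose incident edges are oriented inward. Such a $v$ exists: starting at any node and repeatedly following an outgoing edge cannot revisit a node (as $T$ is acyclic) and must therefore terminate at some node with no outgoing edges. I then split into two cases based on whether $v$ is a leaf or an internal node. If $v$ is a leaf, then its unique incident edge points to $\{v\}$, so $|\{v\}\cap X|>2|X|/3$; since $|X|\ge 2$ this gives $1 > 2|X|/3 \ge 4/3$, a contradiction.

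Otherwise $v$ is an internal node with three incident edges, and removing $v$ from $T$ splits its leaf set into three parts $T_1,T_2,T_3$ that partition $V$. For each $i\in\{1,2,3\}$, the edge between $T_i$ and the component containing $v$ is oriented toward $v$, so $|T_i\cap X| \le |(V\setminus T_i)\cap X|$ and, by our contradictory assumption applied to this edge, $\min(|T_i\cap X|,|(V\setminus T_i)\cap X|) < |X|/3$; the minimum is attained on the $T_i$ side, giving $|T_i\cap X|<|X|/3$. Summing over $i=1,2,3$ yields $|X|=|T_1\cap X|+|T_2\cap X|+|T_3\cap X|<|X|$, the desired contradiction. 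Thus some edge $e$ yields a bipartition $(L,R)$ with both $|L\cap X|\ge|X|/3$ and $|R\cap X|\ge|X|/3$. The only mild subtlety to double-check while writing the proof is the leaf case when $|X|=2$, which is exactly why the hypothesis $|X|\ge 2$ is stated.
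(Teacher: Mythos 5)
Your proof is correct and follows essentially the same ``orient edges toward the heavier side, find a sink, analyze the tripartition at an internal sink'' argument as the paper; the only superficial difference is that you run it as a proof by contradiction (each $T_i$ has fewer than $|X|/3$ elements of $X$, so their sum is less than $|X|$), whereas the paper argues directly that the largest part $C_1$ satisfies $|X|/3 \le |C_1 \cap X| < |X|/2$.
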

\begin{proof}
For every edge $uv$ of the branch decomposition displaying a cut $(L_u,R_v)$, direct $uv$ from $u$ to $v$ if $|R_v \cap X| > |L_u \cap X|$ and from $v$ to $u$ if $|L_u \cap X| > |R_v \cap X|$.
If $|L_u \cap X| = |R_v \cap X|$, then we are done by taking this edge, so assume that every edge is directed in either direction.
Now, because the branch decomposition is a tree, by following the directed edges we end up in a node so that all of its incident edges are directed towards it.
Because $|X| \ge 2$, this node cannot be a leaf, so it corresponds to an internal node displaying a tripartition $(C_1,C_2,C_3)$ of $V$.
Because all edges are directed towards this node, it holds that $|C_i \cap X| < |X|/2$ for all $i \in [3]$.
In particular, if we permute the tripartition so that $C_1$ maximizes the intersection $|C_1 \cap X|$, then $|X|/3 \le |C_1 \cap X| < |X|/2$, which implies that the edge corresponding to $C_1$ displays the cut $(C_1, C_2 \cup C_3)$ with $|C_1 \cap X| \ge |X|/3$ and $|(C_2 \cup C_3) \cap X| > |X|/2$.
\end{proof}

We then prove the lower bound on the Boolean-width of $G$.

\begin{lemma}
The graph $G$ has Boolean-width at least $k(k-3)/6$.
\end{lemma}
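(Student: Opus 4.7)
The plan is to show that every branch decomposition of $G$ contains an edge whose displayed cut $(L,R)$ has Boolean-width at least $k(k-3)/6$. I would first apply Lemma~\ref{lem:balsep} with $X$ consisting of one representative vertex from each $B_i$ (so $|X|=k^2$), obtaining a cut $(L,R)$ in which at least $k^2/3$ representatives lie on each side. Let $a,b,c$ denote the number of $B_i$'s entirely in $L$, entirely in $R$, and split between the two sides, respectively, so that $a+c\ge k^2/3$ and $b+c\ge k^2/3$.

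If $c\ge k(k-3)/6$, then for each split $B_i$ I pick a witness pair $v_i\in B_i\cap L$ and $w_i\in B_i\cap R$. Since $B_i$ is a clique while distinct $B_j$'s are pairwise non-adjacent in $G$, the vertex $w_i$ is adjacent to $v_i$ but to no $v_j$ with $j\neq i$. Hence the $2^c$ subsets $\{v_i\mid i\in S\}$, with $S$ ranging over subsets of the split indices, produce pairwise distinct neighborhoods in $R$ (they differ on the witnesses $w_i$), so the Boolean-width of the cut is at least $c\ge k(k-3)/6$.

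If instead $c<k(k-3)/6$, then $a,b\ge k^2/3-c>k(k+3)/6\ge 1$, so in particular both $a$ and $b$ are positive. Using the symmetry $\bw(L,R)=\bw(R,L)$ coming from the Galois connection between $2^L$ and $2^R$, I may assume without loss of generality that $|A^{2k}[\cS]\cap L|\ge k\cdot 2^{k-1}$, and fix any $B_j\subseteq R$. Under the natural isomorphism $G[A^{2k}[\cS],B_j]\cong R_{2k}[A^{2k}[\cS],B^{2k}]$, for $X\subseteq A^{2k}[\cS]\cap L$ the neighborhood $N(X)\cap B_j$ corresponds, exactly as in the proof of Lemma~\ref{lem:neighborhood:size}, to $B^{2k}\setminus B^{2k}[\inprod{X}^\perp]$ in $\bZ_2^{2k}$, and so depends only on the $\bZ_2$-span $\inprod{X}$. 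Thus distinct spans of subsets of $A^{2k}[\cS]\cap L$ give distinct $N(X)\cap R$, and the Boolean-width is at least the binary logarithm of the number of such spans.

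To count the spans, for each $I\subseteq[k]$ and each tuple $(T_i)_{i\in I}$ with $T_i\in L_i\defeq\{T\subseteq[k+1,2k]\mid a_{\{i\}\cup T}\in L\}$, the set $\{a_{\{i\}\cup T_i}\mid i\in I\}$ spans a subspace of $\bZ_2^{2k}$ of dimension $|I|$, and the unique vector inside this span whose $[k]$-projection equals $e_i$ is $e_i+\sum_{j\in T_i}e_j$, so the span recovers $(I,(T_i))$. Distinct tuples therefore yield distinct spans, giving $\prod_{i=1}^{k}(1+|L_i|)$ of them. Since $\sum_i|L_i|=|A^{2k}[\cS]\cap L|\ge k\cdot 2^{k-1}$ and $|L_i|\le 2^k$, concavity of $\log(1+x)$ shows that this product is minimized by concentrating the mass on $\lceil k/2\rceil$ indices at value $2^k$, giving at least $(1+2^k)^{\lfloor k/2\rfloor}\ge 2^{\lfloor k^2/2\rfloor}$, which dominates $2^{k(k-3)/6}$. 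The two main technical points are the span-to-tuple recovery (needed to turn distinct partial interpretations into distinct subspaces) and the concavity minimization that produces the final quantitative bound.
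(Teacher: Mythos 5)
Your proof is correct and in fact establishes a stronger numerical bound (roughly $2^{\lfloor k^2/2\rfloor}$ distinct neighborhoods, hence Boolean-width about $k^2/2$), but it follows a genuinely different route from the paper's. The paper applies \cref{lem:balsep} with $X = A^{2k}[\cS]$, so that both sides of the resulting cut automatically carry a third of the $A$-vertices; it then distinguishes whether \emph{all} $k^2$ cliques $B_i$ are split (giving an induced matching of size $k^2$) or some $B_i$ lies entirely on one side, in which case it counts independent sets of $A^{2k}[\cS]\cap L$ of the form ``one vertex per surviving $A^{2k}[\cS_j]$'', using \cref{lemma:IS:interpretation} to land in $\sF_{2k}$ and \cref{lem:different:neighborhood} to conclude these have distinct neighborhoods in $B_i$. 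You instead balance a set of representatives of the $B_i$'s, split on a quantitative threshold for the number $c$ of cut cliques $B_i$, and in the small-$c$ case count distinct $\bZ_2$-spans of arbitrary subsets of $A^{2k}[\cS]\cap L$, using the observation from the proof of \cref{lem:neighborhood:size} that $N(X)\cap B_j$ depends only on $\inprod{X}$ (and, by non-degeneracy of the standard $\bZ_2$ form, determines it). Your span-recovery argument --- reconstructing $(I,(T_i)_{i\in I})$ from the $[k]$-projection of the span and the unique preimage of each $e_i$ --- avoids restricting to independent sets, which is what lets the concavity estimate on $\prod_{i}(1+|L_i|)$ produce the sharper exponent; the price is the extra WLOG-plus-pigeonhole step to ensure enough $A$-vertices lie opposite an uncut $B_j$, which the paper's choice of $X$ handles automatically. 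Both proofs rest on the same structural fact (distinct row spaces of the cut matrix yield distinct neighborhoods), but the paper routes through \cref{lem:different:neighborhood} while you re-derive a slightly more general form of it; your argument is a bit longer but gives a better constant.
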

\begin{proof}
Consider a branch decomposition of $V(G)$ that minimizes the Boolean-width.
By Lemma~\ref{lem:balsep}, there exists an edge of the decomposition displaying a bipartition $(L,R)$ of $V(G)$ so that $|A^{2k}[\cS] \cap L| \ge |A^{2k}[\cS]|/3$ and $|A^{2k}[\cS] \cap R| \ge |A^{2k}[\cS]|/3$.
We say that $(L,R)$ cuts a set $B_i$ if $B_i \cap L \neq \emptyset$ and $B_i \cap R \neq \emptyset$.
Suppose $(L, R)$ cuts all $k^2$ sets $B_i$.
Now, because each $B_i$ is a clique and non-adjacent to other $B_i$:s, there exists an induced matching with $k^2$ edges from $L$ to $R$.
This implies that $(L,R)$ has Boolean-width at least $k^2$.
It remains to consider the case that there is a set $B_i$ so that $(L,R)$ does not cut $B_i$.
Without loss of generality, assume that $B_i \subseteq R$.

Now, it remains to lower bound the number of neighborhoods from $A^{2k}[\cS] \cap L$ to $B_i$.
By \cref{lemma:IS:interpretation}, for any selection $\cS' \subseteq \cS$ so that $A^{2k}[\cS']$ is an independent set it holds that $\cS' \subseteq \sF_{2k}$, and therefore by \cref{lem:different:neighborhood} any two different such selections have different neighborhoods into $B_i$.
It follows that the number of neighborhoods from $A^{2k}[\cS] \cap L$ to $B_i$ is at least the number of independent sets in $A^{2k}[\cS] \cap L$, so it remains to lower bound the number of independent sets in  $A^{2k}[\cS] \cap L$.
Recall that the only edges in $A^{2k}[\cS]$ are induced by the disjoint cliques $A^{2k}[\cS_i]$ for $i \in [k]$.
Because $|A^{2k}[\cS] \cap L| \ge |A^{2k}[\cS]|/3$, and $|\cS_i| = 2^k$ for each $i \in [k]$, there are at least $k/6$ indices $i$ so that $|A^{2k}[\cS_i] \cap L| \ge 2^k/6$.
By selecting a single vertex from each such $A^{2k}[\cS_i] \cap L$, we can construct at least 
$(2^k/6)^{k/6} \ge 2^{(k-3)k/6}$ independent sets in $A^{2k}[\cS] \cap L$.
It follows that $\bw(A^{2k}[\cS] \cap L, B_i) \ge k(k-3)/6$, which implies that $\bw(L, R) \ge k(k-3)/6$, which implies that the Boolean-width of the branch decomposition is at least $k(k-3)/6$.
\end{proof}

We then prove the upper bound on the rank-width of $G$.

\begin{lemma}
The rank-width of $G$ is at most $2k+1$.
\end{lemma}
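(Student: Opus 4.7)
The plan is to exhibit a linear arrangement $\sigma$ of $V(G)$ whose associated caterpillar branch decomposition has width at most $2k+1$; since linear rank-width upper bounds rank-width, this suffices. I would take $\sigma$ that first lists the center $A^{2k}[\cS]$ as the concatenation $A^{2k}[\cS_1], A^{2k}[\cS_2], \ldots, A^{2k}[\cS_k]$ (each block in an arbitrary internal order) and then lists $B_1, B_2, \ldots, B_{k^2}$ (again with arbitrary internal order inside each $B_j$).

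The key structural ingredient is the observation that, up to column relabelling, the adjacency matrix $M_G\bigl(A^{2k}[\cS], B_1 \cup \cdots \cup B_{k^2}\bigr)$ is the horizontal concatenation of $k^2$ identical copies of $M_G(A^{2k}[\cS], B^{2k})$, the latter being an induced submatrix of $R_{2k}$. Since duplicating columns never enlarges the column space, this matrix has $\mathbb{F}_2$-rank at most $2k$, and therefore $\rw(X, Y) \le 2k$ for every $X \subseteq A^{2k}[\cS]$ and every $Y \subseteq B_1 \cup \cdots \cup B_{k^2}$.

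Given any prefix $X$ of $\sigma$, there are two cases to handle. If $X \subseteq A^{2k}[\cS]$, then $X = A^{2k}[\cS_1] \cup \cdots \cup A^{2k}[\cS_{i-1}] \cup P$ for some $P \subseteq A^{2k}[\cS_i]$, and the edges crossing $(X,\overline{X})$ split into two groups with disjoint supports: (i) the biclique between $P$ and $A^{2k}[\cS_i]\setminus P$ inside the clique on $A^{2k}[\cS_i]$ (rank $\le 1$) and (ii) the edges from $X$ to the $B_j$'s (rank $\le 2k$ by the structural observation). Applying \Cref{lem:cutcomb} (or equivalently, subadditivity of rank over sums of matrices with disjoint supports) gives $\rw(X,\overline{X})\le 2k+1$. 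Otherwise $A^{2k}[\cS]\subseteq X$, so $X = A^{2k}[\cS] \cup B_1 \cup \cdots \cup B_{j-1} \cup Q$ for some $Q \subseteq B_j$, and the crossing edges again split into (i) the clique edges between $Q$ and $B_j\setminus Q$ (rank $\le 1$) and (ii) the edges from $A^{2k}[\cS]$ to $(B_j\setminus Q)\cup B_{j+1}\cup \cdots \cup B_{k^2}$ (rank $\le 2k$), giving $\rw(X,\overline{X})\le 2k+1$ once more.

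The argument is routine once the rank-$2k$ bound on the $A^{2k}[\cS]$-to-$B$'s adjacency matrix is established; the only thing to notice is the column-duplication trick that collapses the contribution of all $k^2$ copies of $B^{2k}$ to that of a single copy, so the many clique-cliques $B_j$ do not need to be ``opened up'' inside the decomposition at all.
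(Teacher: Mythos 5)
Your proof is correct and takes essentially the same approach as the paper: you construct the same linear arrangement (the blocks $A^{2k}[\cS_1],\dots,A^{2k}[\cS_k]$ followed by $B_1,\dots,B_{k^2}$), use subadditivity of cut-rank (\cref{lem:cutcomb}) to split each prefix cut into a biclique part of rank at most $1$ plus a part from $A^{2k}[\cS]$ to the $B_j$'s of rank at most $2k$, and justify the latter by the identical-neighborhood/duplicated-columns observation. Your two-case split by whether the prefix lies in $A^{2k}[\cS]$ or has absorbed it is just a mild reorganization of the paper's three cases and covers the same ground.
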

\begin{proof}
We prove that the linear rank-width of $G$ is at most $2k+1$.
For each $i \in [k]$ let $\sigma(A^{2k}[\cS_i])$ be an arbitrary permutation of $A^{2k}[\cS_i]$ and for each $j \in [k^2]$ let $\sigma(B_j)$ be an arbitrary permutation of $B_j$.
We create a permutation $\sigma$ of $V(G)$ as the concatenation \[\sigma(A^{2k}[\cS_1]), \ldots, \sigma(A^{2k}[\cS_k]), \sigma(B_1), \ldots, \sigma(B_{k^2}).\]

Let $(L,R)$ be a cut of $V(G)$ induced by this permutation.
First, consider the case that $(L,R)$ cuts a set $A^{2k}[\cS_i]$ for some $i \in [k]$, i.e., $L \cap A^{2k}[\cS_i] \neq \emptyset$ and $R \cap A^{2k}[\cS_i] \neq \emptyset$.
Now by \cref{lem:cutcomb} and the fact that each $B_j$ has identical neighborhood to $A^{2k}[\cS]$ we have that
\[\rw(L,R) \le \rw(L \cap A^{2k}[\cS_i], R \cap A^{2k}[\cS_i]) + \rw(L \cap A^{2k}[\cS], B_1) \le 1 + \rw(A^{2k}[\cS], B_1) \le 1+2k.\]
Then, consider the case that $(L,R)$ cuts a set $B_j$ for some $j \in [k^2]$.
Now we have that
\[\rw(L,R) \le \rw(A^{2k}[\cS], B_{k^2} \cap R) + \rw(B_j \cap L, B_j \cap R) \le \rw(A^{2k}[\cS], B_{k^2}) + 1 \le 2k+1.\]
In the final case when $(L,R)$ cuts no set $A^{2k}[\cS_i]$ and no set $B_j$ we have that $\rw(L,R) \le \rw(A^{2k}[\cS], B_{k^2}) \le 2k$.
\end{proof}

\section{Concluding Remarks}
\label{sec:conc}
We showed the first ETH-tight lower bounds for problems with time complexity $2^{O(\rw^2)} n^{O(1)}$ parameterized by rank-width $\rw$.
In particular, we showed that algorithms with such time complexity are optimal for \textsc{Independent Set}, \textsc{Weighted Dominating Set}, \textsc{Maximum Induced Matching}, and \textsc{Feedback Vertex Set}.

We hope the tools designed in this paper could be used to design tight lower bounds for more problems parameterized by rank-width.
In particular, is the $2^{O(\rw^2)} n^{O(1)}$ time algorithm in \cite{Bui-XuanTV10} for (unweighted) \textsc{Dominating Set} optimal under ETH? What about the $2^{O(q  \rw^2)} n^{O(1)}$ time algorithm in \cite{Bui-XuanTV10} for \textsc{$q$-Coloring} (even when $q=3$)?
Finally, one could also explore the optimality of XP algorithms parameterized by rank-width such as the $n^{2^{O(\rw)}}$ time algorithm in \cite{DBLP:journals/ejc/GanianHO13} for \textsc{Chromatic Number}. For the clique-width parameterization this was solved in~\cite{DBLP:journals/talg/FominGLSZ19}.

\bibliographystyle{plainurl}
\bibliography{biblio}

\end{document}